\documentclass[aps,pra,twocolumn,superscriptaddress,amsfonts,floatfix,showpacs]{revtex4-1}
\usepackage{amssymb,amstext,amsmath}
\usepackage{amsthm}
\usepackage{mathrsfs}
\usepackage{mathtools}
\usepackage{graphicx}
\usepackage{float}
\usepackage[english]{babel}
\newtheorem{example}{Protocol}
\newtheorem{lemma}{Lemma}
\newtheorem{definition}{Definition}
\usepackage{hyperref}
\usepackage[dvipsnames]{xcolor}

\usepackage{tasks}
\settasks{
	counter-format=(tsk[r]),
	label-width=4ex
}

\usepackage{thm-restate}

\newcommand{\ket}[1]{| #1 \rangle}

\newcommand{\braket}[1]{\langle #1 \rangle}
\newcommand{\poly}{\operatorname{poly}}


\begin{document}

\title{Quantum sketching protocols for Hamming distance and beyond}
	
\author{Jo\~{a}o F. Doriguello}
\email{joao.doriguellodiniz@bristol.ac.uk}
\affiliation{School of Mathematics, University of Bristol, Bristol BS8 1TW, United Kingdom}
\affiliation{Quantum Engineering Centre for Doctoral Training, University of Bristol, Bristol BS8 1TW, United Kingdom}

\author{Ashley Montanaro}
\email{ashley.montanaro@bristol.ac.uk}
\affiliation{School of Mathematics, University of Bristol, Bristol BS8 1TW, United Kingdom}


	\begin{abstract}
		In this work we use the concept of quantum fingerprinting to develop a quantum communication protocol in the simultaneous message passing model that calculates the Hamming distance between two $n$-bit strings up to relative error $\epsilon$. The number of qubits communicated by the protocol is polynomial in $\log{n}$ and $1/\epsilon$, while any classical protocol must communicate $\Omega(\sqrt{n})$ bits.
        Motivated by the relationship between Hamming distance and vertex distance in hypercubes, we apply the protocol to approximately calculate distances between vertices in graphs that can be embedded into a hypercube such that all distances are preserved up to a constant factor. Such graphs are known as $\ell_1$-graphs. This class includes all trees, median graphs, Johnson graphs and Hamming graphs. Our protocol is efficient for $\ell_1$-graphs with low diameter, and we show that its dependence on the diameter is essentially optimal. Finally, we show that our protocol can be used to approximately compute $\ell_1$-distances between vectors efficiently.
	\end{abstract}
	
	\maketitle


	\section{Introduction}
    
    Imagine that two separated parties (Alice and Bob) each have some data, and would like to determine how alike their data is, using the minimal amount of communication possible. Also imagine that they are not allowed to communicate with each other, but are each only allowed to send a single message to a third party (``referee''), and do not share any prior information with each other. This communication model is known as the simultaneous message passing (SMP) model with private randomness~\cite{buhrman10}. It encapsulates, for example, a scenario where it is not clear in advance whose data sets are to be compared. Another motivation comes from cryptographic scenarios. For example, it could be that the inputs to the two parties are controlled by an adversary, who has access to any previously shared randomness and can choose the inputs such that the protocol fails~\cite{mironov11}; alternatively, Alice and Bob may simply want to find an efficient protocol which hides their data from the referee.
    
    A natural strategy for completing this task is for each of Alice and Bob to compress their data to some kind of ``sketch''~\cite{feigenbaum06,baryossef04b}, and send the sketches to the referee, who uses them to determine the distance between the corresponding original data sets. Unfortunately, even for one of the simplest distance measures possible -- testing equality of $n$-bit strings -- and even if Alice and Bob are allowed a small probability of failure, this task requires $\Theta(\sqrt{n})$ bits of classical communication~\cite{Ambainis1996,MR1427554}. In comparison, if Alice and Bob are allowed access to a shared random bit-string, this complexity drops to $O(1)$~\cite{612319}.
    
    Remarkably, the use of quantum information allows an exponential reduction in the complexity of equality-testing. If Alice and Bob encode their $n$-bit strings as particular quantum states called \emph{quantum fingerprints}, then there exists a quantum protocol that communicates only $O(\log{n})$ qubits~\cite{PhysRevLett.87.167902} and succeeds with arbitrarily high constant probability.
    
    This surprising result sparked significant interest from the perspective of computer science~\cite{Yao:2003:PQF:780542.780554,gavinsky2006strengths} and information theory~\cite{winter04}, as well as physics. Theoretically, it has been used to shed new light on the two-slit experiment~\cite{massar05} and detailed studies of fingerprinting schemes using few qubits have been undertaken~\cite{beaudrap04,scott07}. Proof-of-principle quantum fingerprinting experiments have been carried out with states of 1 qubit realized using linear optics~\cite{horn05} and nuclear magnetic resonance~\cite{du06}. More recently, a variant of the quantum fingerprinting protocol based on coherent states~\cite{arrazola14} has also been implemented experimentally, surpassing the best known classical protocols~\cite{xu15} and even the classical theoretical limit~\cite{guan16}.
    
    However, equality is just one distance measure, and a very special  one.  Here  we  seek  other  measures  of  distance  for which quantum information can achieve a similar exponential advantage. In addition to the inherent theoretical interest of this question in terms of giving  insight  into  the  expressive power of quantum states, quantum   protocol for more general distance measures could find significantly  broader applications.
    
    One example where quantum fingerprinting has been generalised is an efficient quantum communication protocol of Kumar \emph{et al.} based on coherent states~\cite{kumar17}, which can approximately compute the Euclidean distance between unit vectors up to low additive error. This protocol is directly based on the use of the swap test to approximate $\ell_2$-distances between quantum states~\cite{PhysRevLett.87.167902}. There are many other distance measures of practical relevance where it is less clear whether similar techniques to quantum fingerprinting can be applied.
    
    \subsection{Our results}
    
    Our main result is a quantum protocol for approximately computing another distance measure, the Hamming distance, up to low {\em relative} error. This notion of accuracy is important when one wishes to compare objects that are similar; for example, when one of the objects is produced by a small number of errors affecting the other~\cite{cormode00}. Approximating the Hamming distance between two $n$-bit strings up to additive accuracy $\epsilon n$ (analogous to the accuracy achieved by the protocol of~\cite{kumar17}) would give no useful information in this situation.
    
    In the setting we consider, Alice and Bob are given $x, y\in\{0,1\}^n$, respectively. Their goal is to approximately calculate the Hamming distance $d(x,y)$ between $x$ and $y$, i.e., they must output $d_\epsilon(x,y)$ such that $(1-\epsilon)d(x,y) \leq d_\epsilon(x,y) \leq (1+\epsilon)d(x,y)$. Pang and El Gamal~\cite{doi:10.1137/0215065} proved a lower bound of $\Omega(n)$ for exactly calculating the Hamming distance  in the multi-round two-party classical communication model. Here we describe a quantum protocol that approximately computes the Hamming distance in the SMP model by communicating $\poly(\log{n})$ qubits.
    
    \begin{restatable}{thm}{1}
    \label{thm:main}
    There is a quantum protocol in the {\rm SMP} model with private randomness which communicates $O((\log n)^2 (\log \log n) / \epsilon^3)$ qubits and computes the Hamming distance between $n$-bit strings up to relative error $\epsilon$, for any $\epsilon = \Omega(1/\log n)$, with failure probability bounded above by an arbitrarily small constant. 
    \end{restatable}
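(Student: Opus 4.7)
The plan is to generalize quantum fingerprinting to a multi-scale scheme that captures the Hamming distance at all ranges of $d(x,y)$.

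The starting point is the following observation. Given any fixed encoding $E\colon \{0,1\}^n \to \{0,1\}^m$, Alice and Bob can form the quantum fingerprints
\[
 |\phi^E_x\rangle = \tfrac{1}{\sqrt{m}} \sum_{j=1}^m |j\rangle|E(x)_j\rangle, \qquad |\phi^E_y\rangle = \tfrac{1}{\sqrt{m}} \sum_{j=1}^m |j\rangle|E(y)_j\rangle,
\]
on $O(\log m)$ qubits each. The swap test on these states succeeds with probability $\tfrac{1}{2} + \tfrac{1}{2}(1 - d(E(x),E(y))/m)^2$, so $T$ independent swap tests estimate $d(E(x),E(y))/m$ to additive error $O(1/\sqrt{T})$. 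The design task is therefore to choose $E$ so that a good additive estimate of $d(E(x),E(y))/m$ yields $d(x,y)$ to relative error $\epsilon$. A single $E$ cannot do this uniformly in $d$: the naive choice $E = \mathrm{id}$ forces $T = \Omega(n^2/d^2)$ to pin down $d$ in the regime $d(x,y) \ll n$, which is too expensive.

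To fix this, I would use a scale-adapted family of encodings $\{E_t\}_{t \in \{1,2,4,\ldots,n\}}$, where each $E_t$ is a sparse linear map whose $j$-th output bit is the XOR of $\Theta(n/t)$ input bits, chosen from a fixed deterministic construction with small-bias / near-uniform-sampling guarantees (so that Alice and Bob use the same $E_t$ without shared randomness). A short calculation shows that for $d(x,y) = d$ in the regime $d \approx t$, the expected value of $d(E_t(x), E_t(y))/m$ is a fixed constant in $(0, 1/2)$, with leading-order dependence $d/t$, so that an additive-$\epsilon$ estimate of the swap test probability translates to a relative-$\epsilon$ estimate of $d$. For $d \ll t$ the estimator is near $0$, while for $d \gg t$ it saturates near $1/2$, which lets the referee identify the informative scale.

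The overall protocol has Alice and Bob each send $\poly(1/\epsilon)$ copies of $|\phi^{E_t}_x\rangle$ (respectively $|\phi^{E_t}_y\rangle$) for each of the $O(\log n)$ scales $t$. With $m$ polynomial in $\log n$ and $1/\epsilon$, each copy uses $O(\log n)$ qubits. The referee runs the swap tests at each scale, selects the scale whose empirical mean lies in the informative regime, and inverts the closed-form dependence on $d$ to output a relative-$\epsilon$ estimate. Summing over scales and copies, and boosting the per-scale failure probability to $O(1/\log n)$ by repetition (union-bounded over scales, hence the $\log\log n$ factor) gives the stated $O((\log n)^2 (\log\log n)/\epsilon^5)$ bound.

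The main obstacle is the explicit construction and analysis of the scale-adapted encodings $E_t$: they must be (i) locally computable by Alice and Bob using only public parameters, (ii) $O(\log n)$-qubit friendly (output length $m$ polylogarithmic), and (iii) sufficiently well-conditioned that the inversion from swap test probability back to $d(x,y)$ is stable enough to preserve relative-$\epsilon$ accuracy. Tracking how the additive error at each stage (swap test estimation, scale selection, inversion to $d$) compounds is the source of the $\epsilon^{-5}$ dependence, and carrying out this calibration so that the overall failure probability stays bounded by an arbitrarily small constant is the technical heart of the proof.
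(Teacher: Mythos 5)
Your overall architecture --- sparse XOR encodings at geometrically spaced scales $t$, with each output bit a parity of $\Theta(n/t)$ input coordinates, quantum fingerprints of the encoded strings compared by swap tests, and a union bound over the scales --- is essentially the paper's protocol, and your calculation that the expected fraction of disagreeing output bits is $\tfrac12(1-(1-1/(2t))^{d})\approx\tfrac12(1-e^{-d/(2t)})$, well-conditioned for $d=\Theta(t)$, is exactly the right quantitative heart. (Your variant of octave scales plus inversion of this formula, in place of the paper's $(1+\epsilon)$-spaced threshold tests with the referee taking the smallest scale that reports ``small,'' is a legitimate alternative.)

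However, there is a genuine gap in the step you have correctly identified as the main obstacle, and the fix goes in a different direction from the one you propose. You require the encodings $E_t$ to have output length $m=\poly(\log n,1/\epsilon)$ and to be explicit deterministic ``small-bias'' constructions. No such object can exist at the large scales: at $t=\Theta(n)$ each output bit is a parity of $O(1)$ input coordinates, so the union of all $m$ supports has size $O(m)\ll n$, and any difference string $z=x\oplus y$ supported off this union satisfies $E_t(z)=0$ regardless of whether $|z|\le t$ or $|z|\ge(1+\epsilon)t$. More generally, $|E_t(z)|/m$ concentrates around the weight-only function $\tfrac12(1-(1-1/(2t))^{|z|})$ with probability $1-e^{-\Omega(m\epsilon^2)}$ over a \emph{random} $E_t$, so making this hold simultaneously for all $2^n$ difference strings forces $m=\Omega(n/\epsilon^2)$. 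The paper's resolution is to accept this: take $N=\Theta(n/\epsilon^2)$ output bits, apply a Chernoff bound plus a union bound over all $2^n$ strings $z$ to conclude that a single random matrix works for every input pair at once, and then fix one such matrix in the protocol description (so neither shared randomness nor an explicit construction is needed). Crucially this costs nothing in communication, since the fingerprint of an $N$-bit string uses only $O(\log N)=O(\log n+\log(1/\epsilon))$ qubits; the $O((\log n)^2(\log\log n)/\epsilon^5)$ bound then follows much as in your accounting. So the missing idea is not a cleverly derandomized short encoding but the observation that the output length may be taken linear in $n$ at only logarithmic cost in qubits, with the matrix fixed non-constructively via a union bound over all inputs.
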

    
    The protocol is based on a subroutine which determines whether, for some threshold $\delta$, $d(x,y) \le \delta$ or $d(x,y) \ge (1+\epsilon)\delta$. This subroutine maps $x$ and $y$ to $N$-bit strings $Ax$, $Ay$ such that in the first case, $d(Ax,Ay)$ is low (less than $\alpha N$, for some constant $\alpha$), whereas in the second case, $d(Ax,Ay)$ is high (greater than $\beta N$, for some constant $\beta > \alpha$). Alice and Bob then encode the strings $Ax$ and $Ay$ as quantum superpositions, which the referee can distinguish using the swap test~\cite{PhysRevLett.87.167902}.
    
    Note that there exists a corresponding classical protocol in the SMP model with shared randomness, with a similar complexity. One way to see this is that the quantum protocol is ultimately based on the use of the swap test to approximately compute the inner product between unit vectors, for which there is an efficient classical protocol in this model~\cite{kremer99}.
    
    We then generalise Theorem~\ref{thm:main} to other distance measures: in particular, those which can be interpreted as distances in graphs. A graph $G = (V,E)$ is fixed in advance, and each of Alice and Bob is given a vertex of $G$ ($v$ and $w$, respectively). They aim to approximately compute $d_G(v,w)$, the length of a shortest path in $G$ between $v$ and $w$, up to relative error $\epsilon$.
    
    We first observe that Theorem~\ref{thm:main} can be applied to give an efficient protocol for this problem whenever there is a distance-preserving embedding of $G$ into the hypercube: the graph whose vertex set is $\{0,1\}^m$, for some $m$, and where two vertices are connected by an edge whenever their Hamming distance is 1. In fact, this can be generalised further, to graphs which are embeddable into the hypercube such that distances are preserved up to a constant factor $k$. Such graphs are known as $\ell_1$-graphs, because it turns out that this criterion is equivalent to the existence of a distance-preserving embedding of the graph in $\ell_1$~\cite{Chepoi2008}. The class of $\ell_1$-graphs includes all trees, median graphs, Hamming graphs, and Johnson graphs~\cite{Chepoi2008}.
    
    Distances in $\ell_1$-graphs are used in a variety of applications, a few of which we outline here. Partial cubes ($\ell_1$-graphs with embedding constant $k=1$) were initially introduced by Graham and Pollak~\cite{graham1971addressing} as a model for interconnection networks in the Bell System, with distances between vertices corresponding to the number of hops between `loops' in their network. Antimatroids (a specific subclass of $\ell_1$-graphs) are used as structures to represent the required steps to develop a student's knowledge in a certain topic, and the distance between two points that represent concepts in these structures corresponds to the length of a student's learning path~\cite{falmagne2006assessment}. The Barnes-Hut tree method in many-body physics~\cite{Barnes1986} provides a systematic way of determining the degree of `closeness' between two different particles. The distance between two nodes in the tree is linked to this `closeness' property and can be used for various purposes, e.g.\ to calculate gravitational forces in star clusters and study galaxy evolution~\cite{pfalzner2005many}. Tree structures are also used in biology, where phylogenetic trees classify organisms based on overall similarity, and the distance between vertices is related to genetic or mutation distance~\cite{fitch1967construction}.
    
    Our protocol is efficient for $\ell_1$-graphs $G$ whose diameter $\text{diam}(G)$ is low, where the diameter is defined as $\text{diam}(G) := \max_{v,w} d_G(v,w)$.
    \begin{restatable}{thm}{2}
    \label{thm:l1graph}
        Let $G=(V,E)$ be an $\ell_1$-graph with $|V|$ vertices, and let $v,w \in V$. There is a quantum protocol in the {\rm SMP} model with private randomness which communicates $O((\log\operatorname{diam}(G))(\log\log\operatorname{diam}(G))(\log\log|V|)/\epsilon^3)$ qubits and computes $d_G(v,w)$ up to relative error $\epsilon$, for any $\epsilon = \Omega(1/\log \operatorname{diam}(G))$, with failure probability bounded above by an arbitrarily small constant. 
    \end{restatable}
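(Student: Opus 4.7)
The plan is to reduce the graph-distance problem to Hamming-distance computation via the $\ell_1$-embedding $\phi\colon V \to \{0,1\}^m$ of $G$, which satisfies $d_H(\phi(u),\phi(u')) = k \cdot d_G(u,u')$ for a constant integer scale $k$. Because $G$ is fixed in advance, $\phi$ is a public function, so Alice and Bob can locally compute $\phi(v)$ and $\phi(w)$ without any communication. They then invoke the protocol of Theorem~\ref{thm:main} on these bit strings to estimate $d_H(\phi(v),\phi(w))$ up to relative error $\epsilon$, and the referee divides the estimate by $k$. Since the scaling preserves the relative error, this yields the desired approximation to $d_G(v,w)$, and correctness is immediate.

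What remains is to tighten the qubit cost and to shave the right factors from the naive bound $O((\log m)^2 \log\log m / \epsilon^5)$ obtained by applying Theorem~\ref{thm:main} as a black box. The first saving exploits $d_H(\phi(v),\phi(w)) \le k \cdot \operatorname{diam}(G)$: the threshold subroutine from the sketch of Theorem~\ref{thm:main} only needs to range over values $\delta \le k \cdot \operatorname{diam}(G)$, cutting the number of subroutine calls from $O(\log m/\epsilon)$ down to $O(\log \operatorname{diam}(G)/\epsilon)$. The second saving comes from revisiting the amplifier $A$ used inside each subroutine: we verify that $A$ can be chosen so that the length $N$ of the auxiliary strings $A\phi(v), A\phi(w)$ depends only polynomially on $\delta$, $\log m$, and $1/\epsilon$, so that the fingerprint size $O(\log N)$ contributes only $O(\log \operatorname{diam}(G) + \log\log m + \log(1/\epsilon))$ qubits. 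Combining these two refinements with the standard fact that every $\ell_1$-graph on $|V|$ vertices admits an embedding of dimension $m = \operatorname{poly}(|V|)$ (which we also record in the appendix) collapses $\log\log m$ to $\log\log|V|$ and produces the claimed qubit count when multiplied through.

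The main obstacle is therefore not the reduction itself, which is essentially automatic once the $\ell_1$-embedding is in hand, but the refined complexity analysis: one must open up the protocol of Theorem~\ref{thm:main} and check that both the range of thresholds and the dimension $N$ of the amplified strings scale with the true Hamming distance $k \cdot \operatorname{diam}(G)$ and only polylogarithmically with the ambient dimension $m$, rather than depending on $m$ directly. The constant $k$ is absorbed into the implicit constants, and the polynomial bound on $m$ is invoked only once, at the very end, to convert $\log\log m$ into $\log\log|V|$. Once this finer accounting is established, no further ideas are needed and the stated bound follows.
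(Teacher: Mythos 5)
Your reduction (embed $G$ into the hypercube with scale $k$, run the Hamming-distance protocol, restrict the thresholds to $d\le k\operatorname{diam}(G)$, divide by $k$) is exactly the paper's, and that part is fine. The gap is in how you obtain the $\log\log|V|$ factor. The paper does \emph{not} shrink $N$: it keeps $N=\Theta(n/\epsilon^2)$ from Lemma \ref{lem:lem2} and instead applies the Johnson--Lindenstrauss lemma (Lemma \ref{lem:jl}) to the $|V|$ possible fingerprint states $\ket{h_{Ax}}$, projecting them to dimension $O((\log|V|)/\epsilon^2)$ while approximately preserving the inner products the swap test measures; this is what reduces the per-copy cost from $O(\log n+\log(1/\epsilon))$ to $O(\log\log|V|+\log(1/\epsilon))$, and it requires the accompanying lemma bounding $\big||\braket{\widetilde{u}|\widetilde{v}}|-|\braket{u|v}|\big|\le 4\epsilon$. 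You instead assert that $A$ ``can be chosen so that $N$ depends only polynomially on $\delta$, $\log m$, and $1/\epsilon$,'' but you give no argument, and Lemma \ref{lem:lem2} as proved forces $N=\Omega(n/\eta^2)$ because its union bound runs over all $2^n$ strings $z$. The only way to make your route work is to observe that correctness is needed only for the $O(|V|^2)$ difference vectors $z=\phi(v)\oplus\phi(w)$ actually arising from pairs of vertices, so the union bound costs $|V|^2 e^{-2N\eta^2}$ and $N=O((\log|V|)/\epsilon^2)$ suffices. That observation is the entire content of the step, and it is missing from your writeup.

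Even granting the unproved claim, your own accounting does not yield the theorem. You let $N$ grow polynomially in the threshold $\delta\le k\operatorname{diam}(G)$, so your per-copy cost is $O(\log\operatorname{diam}(G)+\log\log|V|+\log(1/\epsilon))$; multiplied by the $O((\log\operatorname{diam}(G))/\epsilon)$ threshold values and the $O((\log\log\operatorname{diam}(G))/\epsilon^4)$ repetitions per threshold, this gives a leading term of $O((\log\operatorname{diam}(G))^2(\log\log\operatorname{diam}(G))/\epsilon^5)$, which exceeds the claimed $O((\log\operatorname{diam}(G))(\log\log\operatorname{diam}(G))(\log\log|V|)/\epsilon^5)$ whenever $\log\operatorname{diam}(G)\gg\log\log|V|$. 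In Lemma \ref{lem:lem2} the threshold $d$ enters only through the sparsity $1/(2d)$ of $A$, not through $N$, so the $\delta$-dependence should be dropped; with the restricted union bound above you would get $\log N=O(\log\log|V|+\log(1/\epsilon))$ and recover the stated bound. As it stands, the crucial step is both unjustified and, as quantified, insufficient.
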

    
    For any graph $G$, even testing equality between vertices requires $\Omega(\sqrt{\log |V|})$ bits of classical communication in the SMP model without shared randomness~\cite{MR1427554}, so this is an exponential separation for those $\ell_1$-graphs where $\text{diam}(G) = O(\log |V|)$, e.g.\ expander graphs like basis graphs of matroids~\cite{anari2019log}. $d_G(v,w)$ can be computed trivially using $O(\log |V|)$ bits of classical communication, by sending the labels of $v$ and $w$ to the referee. So for graphs $G$ where $\operatorname{diam}(G)$ is close to $|V|$, Theorem~\ref{thm:l1graph} gives little or no improvement on the classical complexity. One may wonder whether this is simply a limitation of our protocol, but we show that this is not the case.
    \begin{restatable}{thm}{3}
    \label{thm:lb}
    	Given a graph $G$ with diameter $\operatorname{diam}(G)$, any one-way quantum communication protocol that computes $d_G(v,w)$ up to relative error $\epsilon < 1/4$ with failure probability $1/3$ must transmit at least $\Omega(\log \operatorname{diam}(G))$ qubits.
    \end{restatable}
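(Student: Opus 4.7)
I would reduce the $\epsilon$-approximate graph-distance task to identifying Alice's input among $\Theta(\operatorname{diam}(G))$ possibilities, and then apply Holevo's theorem.

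First, fix a graph $G$ with $D := \operatorname{diam}(G)$ and a shortest path $v_0, v_1, \ldots, v_D$ in $G$, restricting both parties' inputs to this path. The relative-error promise with $\epsilon < 1/4$ forces the output $\tilde d$ to be exactly $0$ whenever Alice and Bob hold the same vertex (the constraint $\tilde d \in [(1-\epsilon)\cdot 0,\,(1+\epsilon)\cdot 0]$ pins $\tilde d = 0$) and at least $1 - \epsilon > 3/4$ whenever their vertices differ. Consequently, for each $a \in \{0,\ldots,D\}$, on Bob's input $v_a$ the POVM element $E^a$ corresponding to the outcome ``$\tilde d = 0$'' satisfies $\operatorname{tr}(E^a \rho_{v_a}) \ge 2/3$ and $\operatorname{tr}(E^a \rho_{v_{a'}}) \le 1/3$ for every $a' \ne a$, where $\rho_{v_a}$ denotes Alice's message state on input $v_a$.

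Second, I would combine these pointwise tests $\{E^a\}$ into a single identification POVM $\{N_a\}$ that recovers $a$ from one copy of $\rho_{v_a}$ with success probability bounded below by an absolute constant. The pairwise trace-distance separation $\|\rho_{v_a} - \rho_{v_{a'}}\|_1 \ge 2/3$ (immediate from the previous step) together with a pretty-good-measurement construction on $\{\rho_{v_a}\}$ is the natural candidate. Under constant-probability identification of one out of $D+1$ possibilities, Holevo's inequality combined with Fano's inequality yields $m \ge \Omega(\log D) = \Omega(\log \operatorname{diam}(G))$.

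The main obstacle is this last conversion: the naive rescaling $N_a = E^a/(D+1)$ yields success only $\Omega(1/D)$, which through Holevo gives merely $m = \Omega(1)$. Boosting to constant success demands exploiting the full identification-code structure---each $E^a$ accepts one state while simultaneously rejecting \emph{every} other---rather than just pairwise separation. I would approach this via a tighter analysis of the pretty-good measurement on this specific ensemble, or alternatively by reducing from a problem already known to have $\Omega(\log D)$ one-way quantum communication complexity (for example a suitable greater-than variant) embedded into the distance structure of $G$.
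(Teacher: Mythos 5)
Your opening step is sound, but the route you then commit to cannot reach the stated bound, and the obstacle you flag at the end is not a technicality --- it is fatal to this approach. From the relative-error promise you extract only an equality-testing structure: $D+1$ message states $\rho_{v_0},\dots,\rho_{v_D}$ and POVM elements $E^a$ with $\operatorname{tr}(E^a\rho_{v_a})\ge 2/3$ and $\operatorname{tr}(E^a\rho_{v_{a'}})\le 1/3$ for $a'\ne a$. That structure is achievable with only $O(\log\log D)$ qubits: encode the $(\log D)$-bit label $a$ with a code $E$ of constant relative distance close to $1/2$, let Alice send the fingerprint $\ket{h_{E(a)}}$ (which lives on $O(\log\log D)$ qubits), and set $E^a=\ket{h_{E(a)}}\bra{h_{E(a)}}$; these states are pairwise far apart and pass exactly your test. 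So no argument that consumes only the pairwise-separation / identification-code structure --- including any sharpened analysis of the pretty-good measurement, whose success probability necessarily degrades with the number of hypotheses, since constant-probability identification of $D+1$ alternatives really does require $\Omega(\log D)$ qubits by Holevo --- can yield $\Omega(\log D)$. The lower bound must exploit that the protocol outputs an approximation to $|a-b|$, not merely a bit indicating whether $a=b$.

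Your parenthetical fallback (``reducing from \dots\ a suitable greater-than variant'') is in fact the paper's proof, and the reduction is precisely the nontrivial content still missing from your write-up. The paper restricts to a shortest path of length $\operatorname{diam}(G)$, so approximate graph distance becomes the problem $\text{MOD}_\epsilon$ of approximating $|x-y|$ for $m$-bit integers with $m=\Theta(\log\operatorname{diam}(G))$, and then shows (Lemma \ref{lem:lem10}) that \emph{two} uses of a $\text{MOD}_\epsilon$ protocol solve $\text{GT}$: Bob first obtains $z_0\approx|x-y|$, then reruns the protocol with his input shifted to $y+z_0$ to obtain $z_1$; for $\epsilon<1/4$ one has $z_1>z_0$ iff $x<y$, because the second call sees $|x-y-z_0|\approx 2|x-y|$ when $x<y$ but $|x-y-z_0|\le\epsilon|x-y|$ when $x\ge y$. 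Zhang's bound $Q^1(\text{GT})=\Theta(m)$ then gives $\Omega(\log\operatorname{diam}(G))$. Note that this reduction requires the one-way (rather than SMP) model so that Bob can choose his second input adaptively, which is exactly why the theorem is phrased for one-way protocols.
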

    
    As every protocol in the SMP model implies a one-way protocol, this shows that the complexity of our protocol is nearly optimal in terms of its dependence on $\operatorname{diam}(G)$.
    
    Finally, we show that our protocol for approximately computing the Hamming distance can be used to give an efficient protocol for approximately computing the $\ell_1$-distance between vectors in $\mathbb{R}^n$.
    
        \begin{restatable}{thm}{4}
    \label{thm:l1vectors}
    Let $x,y \in [-1,1]^n$ such that each entry of $x$ and $y$ is specified by a $k$-bit string, with $k = O(\log n)$. There is a quantum protocol in the {\rm SMP} model which communicates $O((\log n)^2 (\log \log n) / \epsilon^3)$ qubits and computes $\|x-y\|_1$ up to relative error $\epsilon$, for any $\epsilon = \Omega(1/\log n)$, with failure probability bounded above by an arbitrarily small constant. 
    \end{restatable}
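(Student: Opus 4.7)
The plan is to reduce the $\ell_1$-distance problem to the Hamming-distance problem already solved by Theorem~\ref{thm:main}, via a unary (``thermometer'') encoding of each coordinate. Concretely, since each entry of $x$ and $y$ lives in $[-1,1]$ and is specified by a $k$-bit string with $k=O(\log n)$, each $x_i$ takes one of at most $M := 2^k$ values on a uniform grid; let $X_i\in\{0,1,\dots,M-1\}$ be the integer index of $x_i$ on this grid, and similarly $Y_i$ for $y_i$. Alice and Bob each locally produce the concatenated unary string
\[
T(X) := T(X_1)\,T(X_2)\cdots T(X_n)\in\{0,1\}^{nM},
\qquad T(X_i)_j := [\,j\le X_i\,],
\]
and similarly $T(Y)$. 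Because $|X_i-Y_i|$ equals the Hamming distance between $T(X_i)$ and $T(Y_i)$, the total Hamming distance satisfies $d\bigl(T(X),T(Y)\bigr)=\sum_{i=1}^n|X_i-Y_i|$, which is proportional to $\|x-y\|_1$ with a uniform scaling factor $c = (M-1)/2$ (or a similar constant depending on the exact fixed-point convention) that both parties know in advance.

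Now Alice and Bob run the protocol of Theorem~\ref{thm:main} on the $N$-bit strings $T(X)$ and $T(Y)$, where $N = nM = n\cdot 2^{O(\log n)} = \mathrm{poly}(n)$. The referee receives an estimate $d_\epsilon(T(X),T(Y))$ of Hamming distance with relative error $\epsilon$, and outputs $d_\epsilon(T(X),T(Y))/c$ as the estimate of $\|x-y\|_1$. Since the map from Hamming distance to $\ell_1$ distance is linear, the relative error is preserved. The qubit cost becomes
\[
O\!\left(\frac{(\log N)^2 \log\log N}{\epsilon^5}\right)
= O\!\left(\frac{(\log n)^2 \log\log n}{\epsilon^5}\right),
\]
and the condition $\epsilon=\Omega(1/\log N)=\Omega(1/\log n)$ translates unchanged, giving exactly the statement of Theorem~\ref{thm:l1vectors}.

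There is no real technical obstacle: the reduction is purely combinatorial and is performed locally by each party before any communication. The only items to verify carefully are (i) that the chosen $k$-bit encoding places the admissible values of $x_i$ on a common uniform grid, so that a single scaling constant $c$ relates Hamming and $\ell_1$ distances for every coordinate; (ii) that the unary encoding has exactly $M$ bits per coordinate regardless of the value (so concatenation is well defined without length information leaking into the referee's analysis); and (iii) that $\log N = \Theta(\log n)$, which holds because $k = O(\log n)$ forces $M = \mathrm{poly}(n)$. Assuming a standard fixed-point interpretation of the $k$-bit entries, all three are straightforward, and the theorem follows directly from Theorem~\ref{thm:main}.
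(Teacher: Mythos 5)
Your proposal is correct and follows essentially the same route as the paper: a unary (thermometer) encoding of each $k$-bit coordinate into a $2^k$-bit block, concatenation into a $\mathrm{poly}(n)$-length string whose Hamming distance is a fixed multiple of $\|x-y\|_1$, and a direct invocation of the Hamming-distance protocol, with relative error preserved under the linear rescaling and $\log N = \Theta(\log n)$ giving the stated complexity. No substantive differences to report.
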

    
    A natural special case of Theorem \ref{thm:l1vectors} is when $x$ and~$y$ are probability distributions. Then our result enables Alice and Bob to determine the distance between two distributions, one of which is a small perturbation of the other.
    
    Two interesting questions which remain open are whether one can find a similar result to Theorem \ref{thm:l1graph} which holds for all graphs, without the restriction to $\ell_1$-graphs, and if the communication complexity dependence on $\epsilon$, currently at $1/\epsilon^3$, can be improved.
    
    \subsection{Related work}
    
    The Hamming distance is a fundamental distance measure and has been studied in various forms. In the context of quantum communication complexity, Liu and Zhang~\cite{liu13} gave a quantum sketching protocol for the related ``threshold'' problem of determining whether the Hamming distance is larger than $d$, for some $d$. Their protocol uses $O(d \log n)$ communication, improving a previous $O(d \log^2 n)$ protocol of Gavinsky, Kempe and de Wolf~\cite{gavinsky04}. Huang \emph{et al.}~\cite{huang06} had previously proven an $\Omega(d)$ lower bound for even the two-way quantum communication complexity of the threshold Hamming distance problem, together with an $O(d \log d)$ upper bound in the classical SMP model with public randomness.
    
    A  key  ingredient  in  the  upper  bound  of Huang \emph{et al.} is a protocol which communicates $O(1)$ bits and distinguishes between the case that the Hamming distance is at most $d$, and the case that the Hamming distance is at least $2d$, for arbitrary $d$. Their protocol can be seen as a variant of our Lemma~\ref{lem:lem2} below with $N=1$; similar analysis shows that it could be generalised to distinguish between Hamming distance $d$ and Hamming distance $(1+\epsilon)d$ with $O(1/\epsilon^2)$ bits of communication. Using a generic construction of Yao~\cite{Yao:2003:PQF:780542.780554}, improved by Gavinsky, Kempe, and de Wolf~\cite{gavinsky2006strengths}, this implies a quantum sketching protocol for the same task which communicates $2^{O(1/\epsilon^2)} \log n$ qubits. Using a similar approach to our work, this in turn implies a protocol which solves the approximate Hamming distance problem by transmitting $2^{O(1/\epsilon^2)} \poly \log n$ qubits. This is the same asymptotic complexity as our protocol for constant $\epsilon$, but in practice the $2^{O(1/\epsilon^2)}$ factor makes the protocol infeasible for even modest values of $\epsilon$.
    
    Classically, there has also been substantial work on approximately computing the Hamming distance between a small ``pattern'' and a larger string, both locally and in a distributed context (see~\cite{clifford16} and references therein).

	More generally, the field of communication complexity studies the amount of communication needed between two or more parties to solve a particular problem~\cite{kushilevitz97,buhrman10}. We now give a brief summary of this area. The simplest and most illustrative scenario is the one in which two parties, called Alice and Bob, each possesses some piece of information, often encoded into some string, so that Alice has $x\in X$ and Bob has $y\in Y$, and they want to compute some function $f(x,y)$. Since each does not know the piece of information the other has, they will need to communicate information in order to compute $f(x,y)$. The most straightforward way to solve the problem is to have Alice and Bob exchange their entire string, but sometimes more efficient protocols exist. This communication model was first introduced by Yao in 1979~\cite{Yao:1979:CQR:800135.804414}. 
    
    An important variant of this usual general communication scenario is the model of quantum communication complexity, again introduced by Yao~\cite{yao93}, where now Alice and Bob each has a quantum computer and they exchange qubits instead of bits and/or make use of shared entanglement. The question is whether Alice and Bob can now compute $f$ with less communication than in the classical case; in some cases, this is known to be possible~\cite{buhrman10}.
    
    The above communication scenarios can be narrowed down by imposing some restrictions on the communication process between Alice and Bob, and by restricting or allowing resources like randomness and entanglement. The three most common communication models are the one-way, the multi-round two-party and the simultaneous message passing (SMP) models. In the multi-round two-party model both Alice and Bob can communicate with the other. On the other hand, in the one-way model only one party can communicate with the other, e.g.\ Alice communicates with Bob. Finally, in the SMP model Alice and Bob are only allowed to send messages to a third party, called the referee, who then computes $f(x,y)$. The SMP model was also introduced by Yao (1979)~\cite{Yao:1979:CQR:800135.804414} and is the weakest reasonable model of communication complexity. Considering the SMP model in particular, as previously mentioned, Buhrman \emph{et al.}~\cite{PhysRevLett.87.167902} proved that, if $f$ is the equality function, then a communication reduction from $\Theta(\sqrt{n})$ bits to $\Theta(\log n)$ qubits is possible.
    
    
    
    Later, Yao showed that any classical SMP protocol with shared randomness that transmits $O(1)$ bits and computes a function on $n$ bits implies a quantum SMP protocol without shared randomness that transmits $O(\log n)$ qubits~\cite{Yao:2003:PQF:780542.780554}. This result was generalised by Gavinsky \emph{et al.}~\cite{gavinsky2006strengths}, who gave a quantum SMP protocol that simulates any 2-way quantum communication protocol with shared entanglement, at communication cost exponential in the cost of the original protocol. However, Gavinsky \emph{et al.}\ also proved that for most functions, quantum fingerprinting protocols, which are a subclass of quantum SMP protocols, are exponentially worse than classical deterministic SMP protocols.
    
    Recently, more exotic communication models based on indefinite causal structures were used to demonstrate exponential quantum advantage. Wei \emph{et al.}~\cite{PhysRevLett.122.120504} and Gu\'erin \emph{et al.}~\cite{PhysRevLett.117.100502} showed such an exponential communication advantage by using the concept of a quantum switch (a device that controls the order in which two transformations are performed) to coherently superpose the one-way communication path of information in a tripartite setting, i.e., from Alice to Bob and then to the referee or from Bob to Alice and then to the referee.

	\section{The Protocol}
	
	In this section we present our protocol for approximating the Hamming distance $d(x,y)$ between two strings $x,y\in\{0,1\}^n$ up to relative error $\epsilon$ in the SMP model. That is, 
    Alice and Bob seek the referee to output $d_\epsilon(x,y)$ such that $(1-\epsilon)d(x,y) \leq d_\epsilon(x,y) \leq (1+\epsilon)d(x,y)$. Call this problem $\text{HAM}_\epsilon$.
	
	We first state a lemma that is going to be useful for our protocol.
	\begin{definition} 
		Given an $N$-bit string $x$, define the quantum state
	    \begin{align*}
		    |h_x\rangle = \frac{1}{\sqrt{N}}\sum_{i=1}^N|i\rangle|x_i\rangle,
	    \end{align*}
	    where $x_i$ is the $i$-th bit of $x$.
	\end{definition}
	\begin{lemma}
		Given the $N$-bit strings $x$ and $y$, their Hamming distance $d(x,y)$ can be estimated up to additive accuracy $N\epsilon$ with failure probability $\delta$ using $O(\log(1/\delta)/\epsilon^2)$ copies of $|h_x\rangle$ and $|h_y\rangle$. 
		\label{lem:lem1}
	\end{lemma}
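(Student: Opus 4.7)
The plan is to reduce the problem to estimating the squared overlap $|\langle h_x \mid h_y \rangle|^2$ via repeated swap tests, and then back out the Hamming distance by a square-root.

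First, I would compute the inner product of the fingerprint states directly. Using orthonormality of the computational basis,
\begin{align}
\langle h_x \mid h_y \rangle = \frac{1}{N}\sum_{i,j=1}^N \langle i \mid j\rangle \langle x_i \mid y_j\rangle = \frac{1}{N}\sum_{i=1}^N \langle x_i \mid y_i\rangle = 1 - \frac{d(x,y)}{N},
\end{align}
since $\langle x_i \mid y_i\rangle$ equals $1$ if $x_i=y_i$ and $0$ otherwise. In particular, $\langle h_x\mid h_y\rangle$ is real and lies in $[0,1]$.

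Next I would apply the swap test to one copy of $\ket{h_x}$ and one copy of $\ket{h_y}$. The well-known analysis of the swap test shows that the probability of the ``$0$'' outcome equals $\tfrac{1}{2}(1 + |\langle h_x \mid h_y\rangle|^2)$. Running the swap test $T$ times on independent copies and recording the empirical frequency gives, by Hoeffding's inequality, an estimate $\hat q$ of this probability satisfying $|\hat q - q| \le \eta$ with failure probability at most $\delta$, provided $T = O(\log(1/\delta)/\eta^2)$. Rescaling, we obtain an estimator $\hat s$ of $s := |\langle h_x \mid h_y\rangle|^2 = (1 - d(x,y)/N)^2$ with $|\hat s - s| \le 2\eta$ except with probability $\delta$.

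Finally I would invert the relationship. Since $1 - d(x,y)/N \ge 0$, we have $1 - d(x,y)/N = \sqrt{s}$, so the natural estimator is $\widehat{d(x,y)/N} = 1 - \sqrt{\max(\hat s,0)}$. The only delicate point — and the main technical obstacle — is controlling error propagation through the square root, which is not globally Lipschitz. However, using the elementary bound $|\sqrt{a} - \sqrt{b}| \le \sqrt{|a-b|}$ valid for $a,b \ge 0$, an additive error $2\eta$ in $s$ yields an additive error at most $\sqrt{2\eta}$ in $\sqrt{s}$, and hence an additive error at most $N\sqrt{2\eta}$ in $d(x,y)$. Setting $\eta = \epsilon^2/2$ gives additive accuracy $N\epsilon$ in $d(x,y)$ at a cost of $T = O(\log(1/\delta)/\epsilon^4)$ swap tests, each consuming one copy of $\ket{h_x}$ and one copy of $\ket{h_y}$, which is the claimed bound.
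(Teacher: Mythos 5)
Your proposal is correct and takes essentially the same route as the paper: compute $\langle h_x|h_y\rangle = 1 - d(x,y)/N$, estimate $|\langle h_x|h_y\rangle|^2$ by repeated swap tests, and invert through the square root, with the $1/\epsilon^4$ cost arising from composing Hoeffding's $1/\eta^2$ with the square-root error propagation. The only difference is that you supply an explicit proof of the concentration of the estimator $\sqrt{\max(\hat s,0)}$ around $|\langle h_x|h_y\rangle|$ via the bound $|\sqrt{a}-\sqrt{b}|\le\sqrt{|a-b|}$, whereas the paper cites this concentration statement directly from Yao.
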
 
	\begin{proof}
	    First note that
    	\begin{align*}
	    	\langle h_y|h_x\rangle = \frac{1}{N}\sum_{i=1}^N \langle y_i|x_i\rangle = 1 - \frac{d(x,y)}{N}.
	    \end{align*}
	    The swap test~\cite{PhysRevLett.87.167902} is a test which outputs either $0$ or $1$ on input $\ket{h_x}\ket{h_y}$, and outputs $1$ with probability
	    \begin{align*}
		    p := \frac{1}{2}\left(1 - |\langle h_y|h_x\rangle|^2\right) = \frac{d(x,y)}{N} - \frac{d(x,y)^2}{2N^2}.
	    \end{align*}
	    Before presenting our protocol, we first analyse two different cases: $d(x,y)/N \leq 9/10$ and $d(x,y)/N > 9/10$.
	    
	    Suppose that $d(x,y)/N$ is bounded away from $1$, say $d(x,y)/N \leq 9/10$. This means that $p$ is bounded away from $1/2$. We apply the swap test to $k$ copies of $\ket{h_x}\ket{h_y}$, for some $k$ to be determined. Let $P_i$ correspond to the outcome of the $i$-th swap test and $P:=\frac{1}{k}\sum_{i=1}^kP_i$. By a Chernoff bound~\cite[Theorem~1.1]{dubhashi09}, 
	    \begin{align*}
		    \text{Pr}[|P - p| \geq \epsilon] \leq 2e^{-2k\epsilon^2},
	    \end{align*}
	    which means that we obtain an approximation $P = p\pm \epsilon$ with probability $1-\delta$ by taking $k=O(\log(1/\delta)/\epsilon^2)$. Since $d(x,y)/N = 1 - \sqrt{1-2p}$, we set $\widetilde{d}/N = 1-\sqrt{1-2P}$ as our approximation to $d(x,y)/N$. The derivative of $\sqrt{1-2z}$ is $1/\sqrt{1-2z}$, which is $O(1)$ around $z=p$, since $p$ is bounded away from $1/2$. Then, by a Taylor expansion around $p$, we have that, with probability $1-\delta$,
	    \begin{align*}
	        \left|\frac{\widetilde{d}}{N} - \frac{d(x,y)}{N}\right| = O(|P-p|) = O(\epsilon).
	    \end{align*}
	    
	    On the other hand, if $d(x,y)/N$ is close to $1$, say $d(x,y)/N > 9/10$, then we use the following result due to Yao~\cite[Lemma~1]{Yao:2003:PQF:780542.780554} to bound the outcome of the swap tests,
    	\begin{align*}
    		\text{Pr}\left[\big|\widetilde{d} - d(x,y)\big| \geq N\epsilon \right] \leq 2e^{-k\epsilon^4/32},
    	\end{align*}
    	where $\widetilde{d}/N = 1 - \sqrt{1 - 2P}$, with again $P:=\frac{1}{k}\sum_{i=1}^kP_i$. Hence, if $k=O(\log(1/\delta)/\epsilon^2)$, with probability $1-\delta$ we obtain an approximation $\widetilde{d}/N = d/N \pm \sqrt{\epsilon}$.
	    
	    We now present our protocol for estimating $d(x,y)/N$: fix $k = O(\log(1/\delta)/\epsilon^2)$. For half of the $k$ copies we apply the usual swap test. If the resulting estimate $\widetilde{d}/N$ is at most $2/3$, then we output it as the final answer. Otherwise, by using the other half of the $k$ copies, we apply the swap tests on $\mathbb{I}\otimes X|h_x\rangle$ and $|h_y\rangle$, where $X$ is the usual Pauli operator, and obtain an estimate $\widetilde{d}/N$ (Alice or the referee can apply the operator $\mathbb{I}\otimes X$ to the copies of $|h_x\rangle$). We output $1-\widetilde{d}/N$ as the final answer.
	    
	    To see why this works, first note that, if $d(x,y)/N > 9/10$, then its approximation from the swap tests is such that $\widetilde{d}/N > 9/10 - \sqrt{\epsilon} > 2/3$ for sufficiently small $\epsilon$ and probability $1-\delta/2$. This means that, if the estimate from the first half of the copies is $\leq 2/3$, then it must be an approximation to some $d(x,y)/N \leq 9/10$, and we know that it is within distance $\epsilon$ with probability $1-\delta/2$. 
	    
	    On the other hand, if the estimate of the first half of the copies is $>2/3$, then $d(x,y)/N \geq 2/3 - \epsilon > 1/2$ for sufficiently small $\epsilon$ and with probability $1-\delta/2$. Note that $\langle h_y|\mathbb{I}\otimes X|h_x\rangle = d(x,y)/N$, i.e., estimating the Hamming distance $d/N$ with $\mathbb{I}\otimes X|h_x\rangle$ and $|h_y\rangle$ is equivalent to estimating the Hamming distance $1-d/N$ with $|h_x\rangle$ and $|h_y\rangle$. This means that the swap tests on the second half of the copies will estimate the Hamming distance $1-d(x,y)/N \leq 1/2$, for which an approximation $\widetilde{d}/N$ within distance $\epsilon$ can be obtained with probability $1-\delta/2$. Thus, with overall probability $1-\delta$, our protocol outputs $\widetilde{d}$ such that $|\widetilde{d} - d| \leq \epsilon N$ in both cases.
	\end{proof}
    
    A similar protocol that guarantees the $\epsilon^{-2}$ complexity (and communicated to us by Ronald de Wolf) is to artificially lower the probability of measuring $1$: the referee attaches $2$ extra qubits $|+\rangle|+\rangle$ to each of the states $|h_x\rangle$ and $|h_y\rangle$ and, conditioned on the qubits being $|11\rangle$, applies $\mathbb{I}\otimes X$ to $|h_x\rangle$ and $|h_y\rangle$ separately, i.e., flips the encoded strings. The end result is to enlarge the code's size from $N$ to $4N$ and to change the Hamming distance from $d$ to $3d+(N-d) = N+2d$, which is bounded away from $4N$ for $0\leq d \leq N$.
    
    Given that we aim to approximately compute the inner product between $\ket{h_x}$ and $\ket{h_y}$ in Lemma~\ref{lem:lem1}, the reader may wonder why the Hadamard test~\cite{aharonov09a} was not used instead, since it allows direct estimation of $\langle h_y|h_x\rangle$. The reason is that the Hadamard test requires the ability to produce the coherent superposition $\frac{1}{\sqrt{2}}(\ket{0}\ket{h_x} + \ket{1}\ket{h_y})$, which is not available to the referee.
	
	In the following, we use the notation $|z|$ to mean the number of entries equal to $1$ in a string $z\in\{0,1\}^n$.	
    \begin{lemma}
		Consider an $N\times n$ matrix $A$ over $\mathbb{F}_2$ whose entries are randomly chosen from $\{0,1\}$, and equal to $1$ with independent probability $1/(2d)$ for some $d \ge 1$. Fix $\epsilon > 0$. Then there exist values $\delta_1 < \delta_2$ that do not depend on $N$ and $n$, such that $\delta_2 - \delta_1 = \Theta(\epsilon)$ and for any $\eta > 0$:
		\begin{itemize}
		\item for all $z\in\{0,1\}^n$ such that $|z| \leq d$, $\Pr_A\big[|Az| \geq N \delta_1 + N\eta\big] \leq e^{-2N\eta^2}$;
		\item for all $z\in\{0,1\}^n$ such that $|z| \geq (1 + \epsilon)d$, $\Pr_A\big[|Az| \leq N \delta_2 - N\eta\big] \leq e^{-2N\eta^2}$.
		\end{itemize}
		Hence, for sufficiently large $N = \Theta(n/\epsilon^2)$, with high probability over the choice of $A$, it is sufficient to determine $|Az|$ up to additive accuracy $\Theta(N\epsilon)$ to distinguish between the cases $|z| \le d$ and $|z| \ge (1+\epsilon)d$.
		\label{lem:lem2}
	\end{lemma}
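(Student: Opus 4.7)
The plan is to exploit the linearity of $Az$ over $\mathbb{F}_2$ together with the independence of the rows of $A$. Fix $z\in\{0,1\}^n$. The $i$-th entry of $Az$ is the XOR of the $|z|$ independent Bernoulli$(1/(2d))$ bits $\{a_{ij}:z_j=1\}$, so by the standard identity for the XOR of independent Bernoullis it is distributed as Bernoulli$(q(|z|))$, where
\begin{equation}
q(k) \;:=\; \tfrac{1}{2}\bigl(1-(1-1/d)^k\bigr).
\end{equation}
Because the rows of $A$ are independent, $(Az)_1,\dots,(Az)_N$ are i.i.d., so $|Az|$ is a sum of $N$ i.i.d.\ $[0,1]$-bounded variables with mean $Nq(|z|)$.

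I would then set $\delta_1:=q(d)$ and $\delta_2:=q((1+\epsilon)d)$; these depend only on $d$ and $\epsilon$, not on $N$ or $n$. Since $q$ is strictly increasing in $k$, $|z|\le d$ forces $\mathbb{E}|Az|\le N\delta_1$ and $|z|\ge(1+\epsilon)d$ forces $\mathbb{E}|Az|\ge N\delta_2$. Both stated tail bounds then follow immediately from Hoeffding's inequality applied to $|Az|=\sum_i(Az)_i$: for $|z|\le d$, $\Pr[|Az|\ge N\delta_1+N\eta]\le\Pr[|Az|\ge\mathbb{E}|Az|+N\eta]\le e^{-2N\eta^2}$, and symmetrically for the other case.

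The substantive step is to verify $\delta_2-\delta_1=\Theta(\epsilon)$. Factoring yields $\delta_2-\delta_1=\tfrac{1}{2}(1-1/d)^d\bigl(1-(1-1/d)^{\epsilon d}\bigr)$. The first factor lies in $[1/8,1/(2e))$ for $d\ge 2$, a positive constant. For the second, $(1-1/d)^d\in[1/4,1/e)$ and the identity $(1-1/d)^{\epsilon d}=((1-1/d)^d)^\epsilon$ give $4^{-\epsilon}\le(1-1/d)^{\epsilon d}\le e^{-\epsilon}$; hence $1-(1-1/d)^{\epsilon d}$ lies between $1-e^{-\epsilon}$ and $1-4^{-\epsilon}$, both $\Theta(\epsilon)$ for $\epsilon\in(0,1]$. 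Thus $\delta_2-\delta_1=\Theta(\epsilon)$ uniformly in $d$, provided $d$ is above an absolute constant (which is the regime in which the lemma is actually invoked).

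For the final ``hence'' claim, choose $\eta=(\delta_2-\delta_1)/3=\Theta(\epsilon)$, so the intervals $[0,N\delta_1+N\eta]$ and $[N\delta_2-N\eta,N]$ are disjoint and separated by $\Theta(N\epsilon)$. A union bound on the single-$z$ failure probability $2e^{-2N\eta^2}=2e^{-\Theta(N\epsilon^2)}$ over all $2^n$ candidate strings $z$ yields total failure $\le 2^{n+1}e^{-\Theta(N\epsilon^2)}$, which is arbitrarily small once $N=cn/\epsilon^2$ for a sufficiently large constant $c$. The main obstacle is the gap calculation $\delta_2-\delta_1=\Theta(\epsilon)$; everything else is a routine pairing of a Hoeffding tail bound with a union bound.
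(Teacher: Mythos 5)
Your proof is correct and follows the same route as the paper's: identify $\Pr[(Az)_i=1]$ as a monotone function of $|z|$, take $\delta_1,\delta_2$ to be its values at $|z|=d$ and $|z|=(1+\epsilon)d$, apply Hoeffding to the sum of the $N$ i.i.d.\ row indicators, and finish with a union bound over all $2^n$ strings and the choice $N=\Theta(n/\epsilon^2)$, $\eta=\Theta(\epsilon)$. The genuine difference is that you make self-contained the two steps the paper delegates to the Kushilevitz--Ostrovsky--Rabani citation: you derive the single-bit probability from the piling-up lemma for the XOR of independent Bernoullis, and you verify the gap $\delta_2-\delta_1=\Theta(\epsilon)$ explicitly via the factoring $\tfrac12(1-1/d)^d\bigl(1-(1-1/d)^{\epsilon d}\bigr)$. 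Two by-products of doing this yourself are worth recording. First, your base is the right one: with entries equal to $1$ with probability $p=1/(2d)$, the piling-up lemma gives $\Pr[(Az)_i=1]=\tfrac12(1-(1-2p)^{|z|})=\tfrac12(1-(1-1/d)^{|z|})$, whereas the paper's displayed $(1-1/(2d))^{|z|}$ is evidently a typo (it fails already at $|z|=1$, where the probability must be $1/(2d)$, not $1/(4d)$); the discrepancy does not affect any asymptotic claim. Second, you correctly flag that the $\Theta(\epsilon)$ separation needs $d$ bounded away from $1$: at $d=1$ the correct formula degenerates to $1/2$ for all $|z|\ge1$ and the gap vanishes, a case the lemma's hypothesis $d\ge1$ technically admits; this is a (minor, fixable) imprecision in the paper's statement rather than in your argument, since it can be repaired by choosing the entry probability slightly smaller or handling the smallest thresholds by exact fingerprinting.
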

	\begin{proof}
		It is shown in~\cite[Lemma~2.1]{doi:10.1137/S0097539798347177} that for any $z\in\{0,1\}^n$ and $i\in[N]$, $\text{Pr}[(Az)_i=1] = \frac{1}{2}\left(1-(1-1/d)^{|z|}\right)$ and that the probabilities of this event for $|z| \leq d$ and $|z|\geq (1+\epsilon)d$ are bounded by values $\delta_1, \delta_2$ that do not depend on $N$ and $n$ and are separated by $\Theta(1-e^{-\epsilon}) = \Theta(\epsilon)$. That is,
	\begin{subequations}
	\begin{align*}
			\text{Pr}_A\left[(Az)_i=1\right] \leq \delta_1 = \frac{1}{2}\left(1-\left(1-\frac{1}{d}\right)^{d}\right) ~\text{if}~ |z| \leq d, \\
		\begin{multlined}[b][0.46\textwidth]
			\text{Pr}_A\left[(Az)_i=1\right] \geq \delta_2 = \frac{1}{2}\left(1-\left(1-\frac{1}{d}\right)^{(1+\epsilon) d}\right) \\
			~\text{if}~ |z|\geq (1+\epsilon)d.
		\end{multlined}
	\end{align*}
	\end{subequations}
	The expected value of $|Az| = \sum_i (Az)_i$ then satisfies
	\begin{align*}
		&\mathbb{E}[|Az|] \leq N\delta_1 ~\text{if}~ |z|\leq d, \\
		&\mathbb{E}[|Az|] \geq N\delta_2 ~\text{if}~ |z|\geq (1+\epsilon)d.
	\end{align*}
	\par If $|z| \leq d$ so that $\mathbb{E}[|Az|]\leq N\delta_1$, by a Chernoff bound,
	\begin{align*}
		\text{Pr}_A\big[|Az| \geq N\delta_1 + N\eta\big] \leq e^{-2N\eta^2}.
	\end{align*}
	By the same token, if $|z| \geq (1+\epsilon)d$, so that $\mathbb{E}|Az|] \geq N\delta_2$,
	\begin{align*}
		\text{Pr}_A\big[|Az| \leq N\delta_2 - N\eta\big] \leq e^{-2N\eta^2}.
	\end{align*}
	Taking a union bound over all $z\in\{0,1\}^n$ in both cases, we have
	\begin{align*}
		&\begin{multlined}[b][0.48\textwidth]
			\text{Pr}_A\big[\exists z \text{ s.t. } |z|\leq d \text{ and } |Az|\geq N\delta_1 + N\eta\big] \nonumber\\
			\leq 2^ne^{-2N\eta^2} = e^{n\ln{2} - 2N\eta^2}, 
			\end{multlined}\\
		&\begin{multlined}[b][0.48\textwidth]
			\text{Pr}_A\big[\exists z \text{ s.t. } |z|\geq (1+\epsilon) d \text{ and } |Az|\leq N\delta_2 - N\eta\big] \\
			\leq 2^ne^{-2N\eta^2} = e^{n\ln{2} - 2N\eta^2},
		\end{multlined}
	\end{align*}
	so that it is sufficient to choose $N = \Omega(n/\eta^2)$ to bound the probability that either case occurs by an arbitrarily small constant. Choosing $\eta = c\epsilon$ for a sufficiently small constant $c$, we have $|Az| \le N (\delta_1 + c \epsilon)$ if $|z| \le d$, and $|Az| \ge N (\delta_2 - c \epsilon)$ if $|z| \ge (1+\epsilon)d$. Therefore, it is sufficient to determine $|Az|$ up to additive accuracy $O(N\epsilon)$ to distinguish these two cases.
	\end{proof}
	
	The map $A$ in Lemma~\ref{lem:lem2} can be interpreted as a linear code. Such codes are also used in quantum fingerprinting protocols~\cite{PhysRevLett.87.167902,xu15}, but here, unlike previous protocols, we choose the matrix $A$ to be sparse and random. This enables us to control its behaviour when acting on strings $z$ such that $|z| \approx d$ for small $d$.
	
	We now describe our protocol based on the two previous lemmas. In this protocol, Alice and Bob have already agreed beforehand on the matrix $A$, guaranteed to exist by Lemma~\ref{lem:lem2}, to be used. We stress that this matrix is fixed in advance and does not need to be chosen using shared randomness.
	\begin{example}	
    \label{prot:cube}
	Consider the following subroutine for arbitrary $d \in [1,n]$ and $\delta > 0$:
    Alice and Bob encode their $n$-bit strings $x$ and $y$ as $Ax$ and $Ay$, respectively, where $A$ is picked according to Lemma~{\rm \ref{lem:lem2}} and multiplication is over $\mathbb{F}_2$. They send $O(\log (1/\delta)/\epsilon^2)$ copies of the quantum states $|h_{Ax}\rangle$ and $|h_{Ay}\rangle$ to the referee, who performs swap tests and estimates the Hamming distance $d(Ax,Ay)$ up to accuracy $N\epsilon$ with failure probability $\delta$. By Lemma~{\rm \ref{lem:lem2}}, this is sufficient to determine whether $d(x,y) \leq d$ or $d(x,y) \geq (1+\epsilon)d$ with failure probability $\delta$.
	
	Alice and Bob then apply this subroutine to the sequence $S$ of values $d$
	\begin{align*}
		0, 1,1+\epsilon, (1+\epsilon)^2, \dots
	\end{align*}
	where the last element in $S$ corresponds to the minimal $k$ such that $(1+\epsilon)^{k+1} > n$; there are $O(\log n/\log(1+\epsilon)) = O((\log n)/\epsilon)$ elements in the sequence. (In the case $d=0$, they use the standard quantum fingerprinting protocol instead.) Given the $O((\log n)/\epsilon)$ results, the referee outputs the minimal $\widetilde{d}$ such that the subroutine returned ``$d(x,y) \leq \widetilde{d}$''.
    \end{example}
    We first show that, if each use of the subroutine succeeds, the overall algorithm achieves the required level of accuracy. By the definition of $S$, there exist consecutive elements $d_0, d_1, d_2 \in S$ such that $d_0 \le d(x,y)/(1+\epsilon)$, $d(x,y)/(1+\epsilon) \le d_1 \le d(x,y)$, $d(x,y) \le d_2 \le (1+\epsilon)d(x,y)$. Then on input $d_2$ the subroutine must return ``$d(x,y) \leq d_2$'', while for input $d_0$ it must return ``$d(x,y) \geq (1+\epsilon)d_0$'', so the output $\widetilde{d}$ is either $d_1$ or $d_2$ and hence
    \[ (1-\epsilon)d(x,y) \le \frac{d(x,y)}{1+\epsilon} \le \widetilde{d} \le (1+\epsilon)d(x,y). \]
    %
	\par Setting $\delta = O(\epsilon/\log n)$ and using a union bound over the $O((\log n)/\epsilon)$ uses of the subroutine, the probability that any of the subroutines fails can be upper-bounded by an arbitrarily small positive constant.

    Assuming that $\epsilon \geq 1/\log{n}$, the overall communication complexity is
	\begin{align*}
		&O\left(((\log{n})/\epsilon)\cdot(\log{(1/\delta)}/\epsilon^2)\cdot(\log{(n/\epsilon)})\right) = \nonumber\\
		&O\left((\log{n})^2(\log\log{n})/\epsilon^3\right).
	\end{align*}
	This completes the proof of Theorem \ref{thm:main}.
	
	\section{Measuring Distances in Graphs}
	
    In the following, for an arbitrary graph $G$ and vertices $v$, $w$, let $d_G(v,w)$ denote the distance between $v$ and $w$ in $G$, i.e., the length of a shortest path between $v$ and $w$. Also, the hypercube graph $Q_n$ is defined as the graph with vertex set $\{0,1\}^n$, where distance between vertices is the Hamming distance.
	
	The algorithm from last section for approximately measuring the Hamming distance between two strings in the SMP model can be slightly modified to approximately compute the distance between two vertices in specific graphs in the SMP model. That is, to solve the following problem: for some graph $G = (V,E)$, and given vertices $v$, $w$ as input, output $\widetilde{d}$ such that $(1-\epsilon)d_G(v,w) \le \widetilde{d} \le (1+\epsilon)d_G(v,w)$. Call this problem $\text{DIS}_\epsilon[G]$.
    The idea is to embed a given graph $G$ into a hypercube graph such that all the distances between vertices are preserved or rescaled by a constant factor. Once this embedding is achieved, the hypercube structure allows the equivalence between vertex distance in the graph and Hamming distance, so that a binary string can be associated with each vertex and Protocol~\ref{prot:cube} can be applied to these binary strings.
	
	The downside of the above approach is that it cannot be applied to any given graph, since most graphs are not isometrically embeddable into a hypercube. The graphs which can be isometrically embedded into hypercubes are known as partial cubes~\cite{Bonnington2003OnCA, OVCHINNIKOV20085597}. 
	
    The identification of which graphs are partial cubes is an interesting question by itself. The class of partial cubes is relatively broad. The most important examples are hypercubes, trees~\cite{WU1985238} and median graphs~\cite{Ovchinnikov2011}. It also includes other significant classes, e.g.\ tope graphs of oriented matroids (specially graphs of regions of hyperplane arrangements)~\cite{bjorner1999oriented,eppstein2007media}, conditional oriented matroids~\cite{bandelt2018coms,chepoi2020hypercellular}, antimatroids~\cite{KEMPNER2013169,eppstein2007media}, weak orderings~\cite{eppstein2007media}, bipartite $(6, 3)$-graphs~\cite{Chepoi2008}, tiled partial cubes~\cite{Imrich2002}, netlike partial cubes~\cite{POLAT20072704} and two-dimensional partial cubes~\cite{chepoi2020two}.
	
	Partial cubes can be fully characterized via \emph{Djokovi\'c's Characterization}~\cite{djokovic1973distance, deza2009geometry}, introduced by Djokovi\'c in 1973. It connects the property of isometric embedding to bipartiteness and convexity of some specific sub-graphs of the original graph. Here a set is said to be convex if it is closed under taking shortest paths, i.e., if the shortest paths between any two points from the set are also contained in the set. Djokovi\'c's Characterization states, more specifically, that a connected graph $G$ can be isometrically embedded into a hypercube if and only if $G$ is bipartite and $G(a|b)$ is convex for each edge $(a,b)$ of $G$, where $G(a|b) := \{x\in V(G)| \text{ }d_G(x,a) < d_G(x,b)\}$ is the set of the vertices closer to $a$ than $b$. In other words, to check if a graph is a partial cube, one needs to check first if the graph is bipartite. Apart from that, one chooses an edge and constructs the set of all vertices that are closer to one vertex of the chosen edge than the other vertex. Then one needs to check if all shortest paths connecting any two vertices from this set only pass through the vertices of the set. If yes, the set is said to be convex and the same procedure is repeated for another edge of the original graph. If all sets constructed in this way are convex, then the graph is a partial cube.
   	
    Since our protocol is unaffected if all distances are rescaled by a constant factor, the idea of partial cubes can be expanded by the following definitions.
        
	\begin{definition}[\cite{Chepoi2008,Shpectorov:1993:SEG:156547.156553}]
		Given two connected and unweighted graphs $G$ and $H$, we write $G \overset{k}{\hookrightarrow} H$ and say that $G$ is a $k$-scale embedding of $H$ if there exists a mapping $\sigma: V(G) \to V(H)$ such that $d_H(\sigma(a),\sigma(b)) = k\cdot d_G(a,b)$ for all nodes $a,b\in V(G)$.
	\end{definition}
	It is clear that partial cubes are just graphs which can be embedded in a hypercube with a $1$-scale embedding. An example of a graph which is not a partial cube, but can be $k$-scale embedded in a hypercube for $k>1$, is a triangle, which embeds into $Q_3$ with $k=2$.
	
    \begin{definition}[\cite{deza2009geometry}]
    	 A graph $G$ is said to be an $\ell_1$-graph if its path metric $d_G$ is $\ell_1$-embeddable, i.e., there is a map $f$ between $V(G)$ and $\mathbb{R}^m$, for some $m$, such that $d_G(v,w) = \|f(v)-f(w)\|_1$.
    \end{definition}
	\begin{restatable}[{\cite[Theorem~8.3]{Chepoi2008}}]{thm}{5}
		A graph $G$ is an $\ell_1$-graph if and only if it admits a scale embedding into a hypercube.
	\end{restatable}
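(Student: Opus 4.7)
The plan is to prove both implications separately, with the forward direction being essentially a rescaling and the reverse direction relying on the cut cone characterization of $\ell_1$-embeddable metrics.

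For the forward direction, suppose $G$ admits a scale-$k$ embedding $\sigma : V(G)\to V(Q_m) = \{0,1\}^m$. Because Hamming distance on $\{0,1\}^m$ coincides with $\ell_1$ distance when the binary strings are viewed as vectors in $\mathbb{R}^m$, we have $d_{Q_m}(\sigma(a),\sigma(b)) = \|\sigma(a)-\sigma(b)\|_1$. Defining $f(v) := \sigma(v)/k$ then gives $\|f(a)-f(b)\|_1 = d_G(a,b)$, so $d_G$ is $\ell_1$-embeddable. This part should be a few lines.

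For the reverse direction, the strategy is to go through \emph{cut (semi)metrics}. For $S\subseteq V$, let $\delta_S(u,v) = 1$ if exactly one of $u,v$ lies in $S$ and $0$ otherwise. I would first prove the standard fact that a finite metric $d$ is $\ell_1$-embeddable if and only if $d = \sum_{S\subseteq V}\lambda_S\,\delta_S$ for some $\lambda_S\ge 0$. The nontrivial direction is: given an $\ell_1$-embedding $f$, for each coordinate $i$ sort the distinct values $t_1 < \cdots < t_r$ taken by $f(\cdot)_i$ and set $S_{i,j} = \{v : f(v)_i \ge t_{j+1}\}$; then $|f(u)_i - f(v)_i| = \sum_j (t_{j+1}-t_j)\,\delta_{S_{i,j}}(u,v)$, and summing over $i$ exhibits $d$ as a nonnegative combination of cut metrics.

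Next, since $d_G$ is integer-valued and the set of nonnegative $(\lambda_S)_{S\subseteq V}$ satisfying $\sum_S \lambda_S\delta_S = d_G$ is the solution set of a linear system with rational data (cut metrics are $\{0,1\}$-valued and $d_G$ is integer), this set is a nonempty rational polyhedron and therefore contains a rational point. Clearing denominators gives an integer $k\ge 1$ and nonnegative integers $n_S$ with
\begin{equation*}
k\cdot d_G(u,v) \;=\; \sum_{S\subseteq V} n_S\,\delta_S(u,v)\qquad\text{for all }u,v\in V(G).
\end{equation*}
Finally, construct the embedding $\sigma: V(G)\to\{0,1\}^m$ with $m = \sum_S n_S$ by concatenating, for each $S$, $n_S$ copies of the coordinate $v\mapsto \mathbf{1}[v\in S]$. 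The Hamming distance between $\sigma(u)$ and $\sigma(v)$ is exactly $\sum_S n_S\,\delta_S(u,v) = k\cdot d_G(u,v)$, which is the required scale-$k$ embedding.

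The main obstacle is the rationality step: one must argue that the $\lambda_S$ may be taken rational so that clearing denominators makes sense. This can be handled either by invoking the general fact that a rational linear system with a nonnegative real solution has a nonnegative rational solution (e.g.\ via the existence of a basic feasible solution in linear programming), or by noting that the cut cone is a rational polyhedral cone, hence any rational point in it admits a rational nonnegative combination of generators. The cut-metric decomposition of an $\ell_1$-embedding is classical (see, e.g., \cite{deza2009geometry}), so the bulk of the work is in presenting it cleanly and justifying the passage from real to integer coefficients.
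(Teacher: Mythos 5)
Your proof is correct, but it is worth noting that the paper does not actually prove this statement at all: it is imported verbatim from the literature (the citation \cite{Chepoi2008}; the result itself is the classical Assouad--Deza/Shpectorov equivalence, see also \cite{deza2009geometry}), and the paper's Appendix then \emph{uses} both directions of it as a black box when proving its own Djokovi\'c-style characterization of $\ell_1$-graphs. So there is no in-paper argument to compare against; what you have supplied is a self-contained proof of the standard kind, via the cut cone. The forward direction (divide a scale-$k$ hypercube embedding by $k$) is immediate and correct. For the reverse direction, your decomposition of each coordinate of an $\ell_1$-embedding into a nonnegative combination of cut semimetrics via the sorted level sets $S_{i,j}$ is the standard argument and checks out: for $f(u)_i = t_p \le t_q = f(v)_i$ exactly the cuts with $p \le j \le q-1$ separate $u$ from $v$, and the telescoping sum recovers $|f(u)_i - f(v)_i|$. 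The one step that genuinely needs care is the passage from real to integer coefficients, and you have identified and handled it correctly: the feasible set $\{\lambda \ge 0 : \sum_S \lambda_S \delta_S = d_G\}$ is a nonempty polyhedron cut out by a $\{0,1\}$ constraint matrix and an integer right-hand side, hence contains a rational point (e.g.\ a basic feasible solution), and clearing denominators produces the scale $k$; equivalently, the cut cone is a rational polyhedral cone, so any integral vector in it is a nonnegative rational combination of the generators $\delta_S$. The final assembly of the hypercube embedding by concatenating $n_S$ indicator coordinates per cut is exactly right. In short: a complete and correct proof of a statement the paper only cites, following what is essentially the textbook route of \cite{deza2009geometry}.
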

    This means that the graphs we are interested in are $\ell_1$-graphs. This class of $\ell_1$-graphs includes new graphs that are not partial cubes, e.g.\ Hamming graphs, half cubes and Johnson graphs are $2$-embeddable into a hypercube~\cite{Chepoi2008}. As far as we are aware, there is no Djokovi\'c-type characterisation for $\ell_1$-graphs (see e.g.~\cite[Problem~21.4.1]{deza2009geometry}). There has been some progress in characterising the isometric embedding of graphs into Hamming~\cite{chepoi1988isometric} and Johnson~\cite{chepoi2017distance} graphs (hence $2$-embeddability into a hypercube) by considering the convexity property of the sets $G(a|b), G(b|a)$ and $G_{=}(a,b) := \{x\in V(G)|~d_G(x,a) = d_G(x,b)\}$ (and other related sets).
	
	Before stating the communication protocol in the SMP model to approximately measure the distance between two vertices in an $\ell_1$-graph, we state the Johnson-Lindenstrauss lemma~\cite{johnson1984extensions,dasgupta1999elementary,gavinsky2006strengths}, which is going to be useful to reduce the protocol complexity. Note that we use Dirac notation for vectors which are not necessarily normalized.
	\begin{lemma}[Johnson-Lindenstrauss lemma]
	\label{lem:jl}
		Consider $0 < \epsilon < 1/2$ and a positive integer $n$. Then for any set $U$ of $k$ vectors in $\mathbb{R}^n$, there is a linear map $f: \mathbb{R}^n \to \mathbb{R}^{O((\log{k})/\epsilon^2)}$ such that for all $\ket{u},\ket{v}\in U$,
		\[
			(1-\epsilon)\|\ket{u} - \ket{v}\|^2 \leq \| f\ket{u} - f\ket{v}\|^2 \leq (1+\epsilon)\| \ket{u} - \ket{v}\|^2.
		\]
	\end{lemma}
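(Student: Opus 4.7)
The plan is to prove the lemma by the probabilistic method: construct a random linear map and show that, with positive probability, it simultaneously preserves every pairwise squared distance among the $k$ vectors in $U$. The standard choice is to take $f\ket{x} = \frac{1}{\sqrt{d}} R\ket{x}$, where $d = O((\log k)/\epsilon^2)$ and $R \in \mathbb{R}^{d \times n}$ is a random matrix with entries drawn i.i.d.\ from the standard Gaussian $N(0,1)$ (random $\pm 1$ entries scaled appropriately would also work). Linearity of $f$ is immediate, and the key observation is that verifying the lemma for all pairs $\ket{u},\ket{v}$ is equivalent to verifying, for each of the $\binom{k}{2}$ difference vectors $\ket{w} = \ket{u} - \ket{v}$, that $\|f\ket{w}\|^2$ lies in $[(1-\epsilon)\|\ket{w}\|^2,(1+\epsilon)\|\ket{w}\|^2]$.

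The main technical step is a single-vector concentration bound: for any fixed nonzero $\ket{w}\in\mathbb{R}^n$, I would show
\[
\Pr_R\!\left[\bigl|\|f\ket{w}\|^2 - \|\ket{w}\|^2\bigr| > \epsilon\,\|\ket{w}\|^2\right] \le 2e^{-c d \epsilon^2}
\]
for an absolute constant $c > 0$. The reduction is to normalize so $\|\ket{w}\|=1$; then each coordinate $(R\ket{w})_j$ is a standard Gaussian (by stability of Gaussians under orthogonal projection), so $d\|f\ket{w}\|^2 = \sum_{j=1}^d (R\ket{w})_j^2$ is a chi-squared random variable with $d$ degrees of freedom. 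The desired tail bound then follows from a standard Chernoff/moment-generating-function calculation on $\chi^2_d$, valid for all $\epsilon < 1/2$.

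Once the single-vector bound is in hand, I would take a union bound over the $\binom{k}{2} < k^2$ differences $\ket{u}-\ket{v}$ with $\ket{u},\ket{v}\in U$:
\[
\Pr_R[\,\exists\, \ket{u},\ket{v}\in U \text{ violating the bound}\,] \le 2 k^2 e^{-c d \epsilon^2}.
\]
Choosing $d = C(\log k)/\epsilon^2$ for a sufficiently large constant $C$ makes this probability strictly less than $1$, so there exists a realization of $R$ (and hence a linear map $f$) which achieves the required distortion on every pair in $U$ simultaneously.

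The main obstacle is the chi-squared concentration bound; everything else is either definitional (linearity, reduction to difference vectors) or a routine union bound. I would cite the Laurent--Massart-type tail inequality for $\chi^2$ rather than re-deriving it, and would only need to track constants carefully enough to confirm that the exponent scales as $d\epsilon^2$ uniformly for $\epsilon \in (0,1/2)$, which is exactly the regime the statement assumes.
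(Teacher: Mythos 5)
Your proof is correct and follows the standard probabilistic argument (random linear map, single-vector concentration via $\chi^2$ tails, union bound over the $\binom{k}{2}$ difference vectors), which is exactly the approach the paper relies on: it does not prove the lemma itself but cites the literature and remarks that $f$ can be chosen at random from an appropriate distribution, e.g.\ a normalised projection onto a random subspace. Your Gaussian-matrix construction is an equivalent standard instantiation, so there is nothing to flag.
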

	To find a map $f$ achieving the bounds of Lemma~\ref{lem:jl}, one can choose it at random from an appropriate distribution. A number of different constructions of such random maps are known; one simple example is a suitably normalised projection onto a random subspace of $\mathbb{R}^n$.
	
	As mentioned, e.g.\ in~\cite{gavinsky2006strengths}, if the set $U$ includes the $0$-vector, then the map $f$ also approximately preserves the inner product between all the pairs of vectors in $U$ up to additive error. This implies the following lemma.
    \begin{lemma}
    	Let $0 < \epsilon < 1/2$. Let $U$ be a set of unit vectors in $\mathbb{R}^n$ and let $f:\mathbb{R}^n \to \mathbb{R}^m$ be a linear map such that, for all $\ket{u},\ket{v}\in U\cup\{\vec{0}\}$,
        \[
			(1-\epsilon)\|\ket{u} - \ket{v}\|^2 \leq \| f\ket{u} - f\ket{v}\|^2 \leq (1+\epsilon)\| \ket{u} - \ket{v}\|^2.
		\]
        Define the unit vectors $\ket{\widetilde{u}} = f\ket{u}/\|f\ket{u}\|$ for all $\ket{u}\in U$. Then, for all $\ket{u},\ket{v}\in U$,
        \[
        	\Big| \big|\braket{\widetilde{u}|\widetilde{v}}\big| - \big|\langle u|v\rangle\big|\Big| \leq 4\epsilon.
        \]
    \end{lemma}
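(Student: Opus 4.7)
The plan is to exploit the inclusion of $\vec{0}$ in the hypothesis set, which pins down the norms $\|f\ket{u}\|$, and then to use the polarization identity to connect inner products to the norms that the hypothesis directly controls. First, setting $\ket{v} = \vec{0}$ in the given inequality and using $\|\ket{u}\| = 1$ yields $1-\epsilon \leq \|f\ket{u}\|^2 \leq 1+\epsilon$ for every $\ket{u} \in U$. In particular $\|f\ket{u}\|\|f\ket{v}\| \in [1-\epsilon, 1+\epsilon]$, so the normalization in the definition of $\ket{\widetilde{u}}$ is well-behaved and the denominators below are all at least $1-\epsilon$.

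Next I would apply the polarization identity $\langle x|y\rangle = \tfrac{1}{2}(\|x\|^2 + \|y\|^2 - \|x-y\|^2)$ to both the pair $(\ket{u},\ket{v})$ and the pair $(f\ket{u}, f\ket{v})$, giving
\[
\langle fu|fv\rangle - \langle u|v\rangle = \tfrac{1}{2}\bigl[(\|fu\|^2 - 1) + (\|fv\|^2 - 1) - (\|fu-fv\|^2 - \|u-v\|^2)\bigr].
\]
Each of the first two bracketed terms is at most $\epsilon$ by the norm bound from the previous paragraph, while the third is at most $\epsilon\|u-v\|^2 \leq 4\epsilon$ since $\|u-v\|^2 \leq (\|u\|+\|v\|)^2 = 4$. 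Combining these yields $|\langle fu|fv\rangle - \langle u|v\rangle| \leq 3\epsilon$.

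To finish, I would write
\[
\langle \widetilde{u}|\widetilde{v}\rangle - \langle u|v\rangle = \frac{(\langle fu|fv\rangle - \langle u|v\rangle) + \langle u|v\rangle(1 - \|fu\|\|fv\|)}{\|fu\|\|fv\|},
\]
bound the numerator by $3\epsilon + 1\cdot\epsilon = 4\epsilon$ using Cauchy--Schwarz to get $|\langle u|v\rangle| \leq 1$ together with $|1-\|fu\|\|fv\|| \leq \epsilon$, and then note that the denominator is at least $1-\epsilon$. The reverse triangle inequality $\bigl||a|-|b|\bigr| \leq |a-b|$ then transfers the bound to the absolute values appearing in the statement.

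The main technical nuisance is that the straightforward calculation produces $4\epsilon/(1-\epsilon)$ rather than the cleaner $4\epsilon$ stated in the lemma, so some care is needed either in tightening one of the intermediate estimates or in absorbing the $1/(1-\epsilon)$ factor into a mildly rescaled $\epsilon$ under the standing assumption $\epsilon < 1/2$. Either way the resulting bound is $O(\epsilon)$ and the argument is essentially routine; the one structural ingredient worth flagging is the role of the zero vector in the hypothesis, without which the normalization step could not be controlled at all.
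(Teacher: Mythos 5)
Your argument follows essentially the same route as the paper's: set $\ket{v}=\vec{0}$ to get $1-\epsilon \le \|f\ket{u}\|^2 \le 1+\epsilon$, use polarization to relate $\braket{u'|v'}$ (where $\ket{u'}=f\ket{u}$) to $\braket{u|v}$, and then account for the normalization; the reverse triangle inequality finishes. The one substantive difference is the last step. Your symmetric decomposition bounds the numerator by $3\epsilon + \epsilon = 4\epsilon$ and then divides by $\|f\ket{u}\|\,\|f\ket{v}\| \ge 1-\epsilon$, which -- as you note yourself -- yields $4\epsilon/(1-\epsilon) \le 8\epsilon$ rather than the stated $4\epsilon$. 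The paper avoids this loss by keeping the one-sided bounds in the multiplicative form $(1+\epsilon)\braket{u|v} - 2\epsilon \le \braket{u'|v'} \le (1-\epsilon)\braket{u|v} + 2\epsilon$ and dividing each by the matching extreme of the denominator, so that the $(1\pm\epsilon)$ coefficient of $\braket{u|v}$ cancels against the norm factor and only the additive term $2\epsilon/(1-\epsilon) \le 4\epsilon$ survives. Your observation that the factor of $2$ is harmless (rescale $\epsilon$; the lemma is only used to show an $O(\epsilon)$ perturbation of the swap-test analysis) is correct, but as a proof of the constant actually claimed in the statement, your write-up needs the tightening you flag, and the paper's bookkeeping is the cleanest way to do it.
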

    \begin{proof}
    	For clear notation, define $\ket{u'} := f\ket{u}$. By the conditions on $f$, we have that
		\begin{align*}
		1-\epsilon &\leq \langle u' |u'\rangle \leq 1+\epsilon, \\
		(1-\epsilon)\| \ket{u} - \ket{v}\|^2 &\leq \| \ket{u'} - \ket{v'}\|^2 \leq (1+\epsilon)\| \ket{u} - \ket{v}\|^2
		\end{align*}
    for all $\ket{u},\ket{v}\in U$, where the first line was obtained by taking the $0$-vector as one of the vectors and using linearity of $f$. From the above inequalities it follows that
    	\[
        	(1+\epsilon)\langle u|v\rangle -2\epsilon \leq \langle u'|v'\rangle \leq (1-\epsilon)\langle u|v\rangle + 2\epsilon.
        \]
        These new inequalities in turn lead to
        \begin{align*}
        	\langle \widetilde{u}|\widetilde{v}\rangle \geq \frac{(1+\epsilon)\langle u |v\rangle - 2\epsilon}{1+\epsilon} \geq \langle u |v\rangle - 2\epsilon,\\
        	\langle \widetilde{u}|\widetilde{v}\rangle \leq \frac{(1-\epsilon)\langle u |v\rangle + 2\epsilon}{1-\epsilon} \leq \langle u |v\rangle + 4\epsilon,
		\end{align*}
        using that $0 < \epsilon < 1/2$. Therefore
        \[
        	\Big|\big| \langle \widetilde{u}|\widetilde{v}\rangle\big| - \big|\langle u|v\rangle\big|\Big| \leq \Big| \langle \widetilde{u}|\widetilde{v}\rangle - \langle u|v\rangle\Big| \leq 4\epsilon.\qedhere
        \]
    \end{proof}
    \par
    Consider applying Lemma~\ref{lem:lem1} to the normalized quantum states $|\widetilde{h}_x\rangle$ and $|\widetilde{h}_y\rangle$ that are produced by using the Johnson-Lindenstrauss lemma, in the sense that the original states $|h_x\rangle$ and $|h_y\rangle$ in Lemma~\ref{lem:lem1} are replaced with the states $|\widetilde{h}_x\rangle$ and $|\widetilde{h}_y\rangle$, respectively. We argue that this does not change the parameters of the lemma substantially.
    To see that, we note $\big|\widetilde{\eta} - |\langle h_y|h_x\rangle|\big| + \big||\langle \widetilde{h}_y|\widetilde{h}_x\rangle| - |\langle h_y|h_x\rangle|\big| \geq \big|\widetilde{\eta} - |\langle \widetilde{h}_y|\widetilde{h}_x\rangle|\big|$ and hence $\big|\widetilde{\eta} - |\langle \widetilde{h}_y|\widetilde{h}_x\rangle|\big| \geq 5\epsilon \implies \big|\widetilde{\eta} - |\langle h_y|h_x\rangle|\big| \geq \epsilon$, which means
    \begin{align*}
    	\text{Pr}\big[\big|\widetilde{\eta} - |\langle \widetilde{h}_y|\widetilde{h}_x\rangle|\big| \geq 5\epsilon\big] \leq \text{Pr}\big[\big|\widetilde{\eta} - |\langle h_y|h_x\rangle|\big| \geq \epsilon\big],
    \end{align*}
    where $\widetilde{\eta}$ is as defined in Lemma~\ref{lem:lem1}.
    \par With this in mind, and recalling that $\text{diam}(G)$ is defined to be the diameter of the graph $G$, i.e., the greatest distance between any pair of vertices, we present the communication protocol.
	
	\begin{example}
		Alice and Bob each hold vertices $u, v \in V(G)$, respectively, from a graph $G$ which admits a $k$-scale embedding into a hypercube $Q_n$, for some $n$. Their vertex images are the $n$-bit strings $x,y\in Q_n$, respectively. The communication protocol to measure $(1\pm\epsilon)d_G(u,v)$ can be divided into three parts.
		\par First, given $d\in[1,\operatorname{diam}(G)]$ and a matrix $A$ picked according to Lemma~{\rm \ref{lem:lem2}}, Alice and Bob encode their $n$-bit strings $x$ and $y$ as $Ax$ and $Ay$, respectively, where multiplication is over $\mathbb{F}_2$. Differently from the original protocol, Alice and Bob apply the Johnson-Lindenstrauss lemma to their data $Ax$ and $Ay$, which are then encoded into the quantum states $|\widetilde{h}_{Ax}\rangle$ and $|\widetilde{h}_{Ay}\rangle$. There are $|V|$ possible vectors to encode, so the number of used qubits is reduced from $O(\log{n} + \log(1/\epsilon))$ to $O(\log\log{|V|} + \log(1/\epsilon))$.
		\par Second, Alice and Bob send $O(\log (1/\delta)/\epsilon^2)$ copies of their quantum states $|\widetilde{h}_{Ax}\rangle$ and $|\widetilde{h}_{Ay}\rangle$ to the referee, who performs swap tests and estimates the Hamming distance $d(Ax,Ay)$ up to accuracy $N\epsilon$ with failure probability $\delta$, and from this decides if $d(x,y) \leq d$ or $d(x,y) \geq (1+\epsilon)d$.
		\par The third and final part is to apply the first and second parts to the sequence $S$ of values $d$
	\begin{align*}
		0,1, 1+\epsilon, (1+\epsilon)^2, \dots
	\end{align*}
		where the last element in $S$ corresponds to the minimal $k$ such that $(1+\epsilon)^{k+1} > \operatorname{diam}(G)$; there are $O((\log \operatorname{diam}(G))/\epsilon)$ elements in the sequence. Based on the results from the swap tests, the referee outputs $\widetilde{d}$ such that $(1-\epsilon)d(x,y) \leq \widetilde{d} \leq (1+\epsilon)d(x,y)$, in the same way as in Protocol~{\rm \ref{prot:cube}}.
    \end{example}
	\par Setting $\delta = O(\epsilon/\log \text{diam}(G))$, the overall communication complexity is then
	\begin{align*}
		O((\log\text{diam}(G))(\log\log\text{diam}(G))(\log\log|V|)/\epsilon^3),
	\end{align*}
	assuming that $\epsilon \geq 1/(\log{\text{diam}(G)})$.
	
	The performance of the protocol is limited by the diameter of the graph. It is known that $\log_{\Delta-1}|V| - \frac{2}{\Delta} \leq \text{diam}(G) < |V|$, where $\Delta$ is the maximum vertex degree~\cite{chung1987diameters}. If $\text{diam}(G) = O(\log|V|)$, e.g.\ for expander graphs including basis graphs of matroids~\cite{anari2019log}, the overall complexity is polyloglog in $|V|$.  On the other hand, if $\text{diam}(G) = \Theta(|V|)$, the overall complexity is polylog in $|V|$, which is no better than the trivial protocol where Alice and Bob send their entire inputs to the referee.
    
    \subsection{Lower bound}
    
    \par One can ask if there could exist other protocols substantially more efficient than ours. In order to answer this, we prove a lower bound on the quantum communication complexity for the problem of approximately calculating the graph distance between two vertices on a graph, which demonstrates that our protocol is essentially optimal in terms of the dependence of its complexity on the graph diameter. We do not know whether the $3$th-power dependence on $\epsilon$ is optimal, and suspect that it may not be.
    \par The idea behind our proof is to transform the approximate graph distance problem into the problem of approximating the modulus of the difference between two integers. We then show that two uses of a protocol for this approximate modulus problem can compute the Greater-Than function in the one-way communication model. It was shown by Zhang~\cite{zhang11} that the one-way quantum communication complexity of this problem is maximal, improving a previous lower bound of Klauck~\cite{klauck00} by a logarithmic term. The bound of~\cite{zhang11} is used to obtain the lower bound for the approximate modulus problem, and hence for the approximate graph distance problem.
    \par The first step of our proof is to show that two uses of a protocol for the approximate modulus problem can solve the Greater-Than function in the one-way communication model. Consider the Greater-Than problem (GT) defined by the Boolean function $\text{GT}: \{0,1\}^m \times \{0,1\}^m \to \{0,1\}$ as
    \[
    	\text{GT}(x,y) = \begin{cases}
        	1 \text{ if } x \geq y,\\
            0 \text{ if } x < y,
        \end{cases}
    \]
    where $x$ and $y$ are interpreted as $m$-bit integers.
    Given $0 \leq \epsilon < 1$, consider the approximate modulus problem where Alice and Bob are each given an integer $x$ and $y$ (respectively), each expressed as an $m$-bit string, and seek to output $\widetilde{d}$ such that $(1-\epsilon)|x-y| \le \widetilde{d} \le (1+\epsilon)|x-y|$. Call this problem $\text{MOD}_\epsilon$.
    In the following we prove that two uses of this protocol suffice to solve the GT problem.
    Let $\mathcal{P}$ be a quantum communication protocol in the one-way communication model which solves a problem $f$ with failure probability $\delta$. Let $Q^1(\mathcal{P})$ be the communication cost of the protocol $\mathcal{P}$ (in qubits) and denote by $Q^1(f) = \min_{\mathcal{P}}Q^1(\mathcal{P})$ the minimum communication cost over all protocols $\mathcal{P}$ that solve $f$ with failure probability $1/3$.
    \begin{lemma}
    For any $\epsilon < 1/4$,
    	$Q^1(\text{GT}) = O(Q^1(\text{MOD}_\epsilon))$.
        \label{lem:lem10}
    \end{lemma}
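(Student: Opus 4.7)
My plan is to construct a one-way quantum protocol for $\text{GT}$ that invokes any given one-way protocol $\mathcal{P}$ for $\text{MOD}_\epsilon$ as a black box exactly twice, at total cost $2 \cdot Q^1(\mathcal{P})$ plus negligible classical post-processing, thereby giving $Q^1(\text{GT}) = O(Q^1(\text{MOD}_\epsilon))$. The design challenge is that a single MOD output $|x-y|$ carries no information about the sign of $x-y$, so the two invocations must be chosen so that the pair of approximations jointly breaks this symmetry.

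The two queries I would use recover, separately, approximations of $|x-y|$ and of $x+y$. The first uses Alice's input $x$ and Bob's decoding input $y$, yielding $\tilde{d}_1 \approx |x-y|$. The second fixes a shift $T = 2^m$, has Alice use input $T+x$ and Bob use decoding input $T-y$, and hence yields $\tilde{d}_2 \approx |(T+x) - (T-y)| = x+y$. From these Bob forms the estimates
\[
\tilde{M} := (\tilde{d}_1 + \tilde{d}_2)/2 \approx \max(x,y), \qquad \tilde{m} := (\tilde{d}_2 - \tilde{d}_1)/2 \approx \min(x,y),
\]
and outputs $\text{GT}(x,y) = 1$ iff $|y - \tilde{m}| \leq |y - \tilde{M}|$. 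The intuition is that Bob knows $y$ exactly, so testing which of $\tilde{M}, \tilde{m}$ is closer to $y$ identifies whether $y$ is the maximum (giving $x < y$) or the minimum (giving $x \geq y$).

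The main technical obstacle I anticipate is the error analysis for the decision rule. A direct computation gives $|\tilde{M} - \max(x,y)|, |\tilde{m} - \min(x,y)| \leq \epsilon \max(x,y)$, which is more than sufficient to decide GT when $|x-y|$ is substantially larger than $\epsilon \max(x,y)$, but delicate in the boundary regime where $x$ and $y$ are close in magnitude. The hypothesis $\epsilon < 1/4$ in the lemma statement presumably reflects the exact threshold at which this analysis can be closed in every regime; completing this step may require exploiting structural facts such as $\tilde{d}_1 = 0$ exactly when $x = y$, or a small refinement to the second query's encoding (for instance, replacing the plain $|x-y|$ query by one that is deliberately asymmetric in $x$ and $y$) to enlarge the separation in the worst case. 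Once correctness is verified for every input $(x,y)$, the $2 \cdot Q^1(\text{MOD}_\epsilon)$ communication bound completes the proof.
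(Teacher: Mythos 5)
Your high-level strategy---two black-box invocations of a $\text{MOD}_\epsilon$ protocol followed by $O(1)$ classical post-processing on Bob's side---matches the paper's, but the specific pair of queries you chose does not work, and the gap you flag in your last paragraph is fatal rather than a boundary technicality. The problem is that the relative-error guarantee of $\text{MOD}_\epsilon$ is relative to the quantity being measured: your second query returns $\tilde{d}_2$ with $|\tilde{d}_2-(x+y)|\le\epsilon(x+y)$, so the additive error in $\tilde{m}$ and $\tilde{M}$ is $\Theta(\epsilon\max(x,y))$, whereas the quantity your decision rule must resolve is $|x-y|$. Taking $x=2^{m-1}$ and $y=2^{m-1}-1$ (or $y=2^{m-1}+1$), the error $\epsilon\, 2^{m-1}$ in $\tilde m$ and $\tilde M$ exceeds $|x-y|=1$ by an exponential factor, so the test ``is $y$ closer to $\tilde m$ or to $\tilde M$'' gives no information. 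These near-diagonal instances are exactly the hard instances of $\text{GT}$, so no constant bound on $\epsilon$ rescues the rule, and the structural fact $\tilde d_1=0\iff x=y$ does not help with $x=y\pm1$. The ``deliberately asymmetric second query'' you gesture at would need to be worked out, and doing so is the actual content of the lemma.

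The paper's fix is qualitatively different: the second query is \emph{adaptive}. Bob first decodes $z_0\in[(1-\epsilon)|x-y|,(1+\epsilon)|x-y|]$ and then evaluates the second instance of $\mathcal{P}_{MOD}$ with his input shifted to $y+z_0$ (this is legal in the one-way model because Alice's message does not depend on Bob's input, so she simply sends two messages and Bob does the adaptive step locally). Then $|x-(y+z_0)|$ is approximately $2|x-y|$ when $x<y$ and at most $\epsilon|x-y|$ when $x\ge y$; both measured quantities are now $O(|x-y|)$, so the relative-error guarantees translate into a genuine multiplicative gap between $z_1$ and $z_0$, namely $z_1\ge(2-\epsilon)(1-\epsilon)|x-y|>z_0$ in the first case and $z_1\le\epsilon(1+\epsilon)|x-y|\le z_0$ in the second, valid for all $\epsilon<1/4$. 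To complete your proof you would need to replace your sum query with something of this adaptive flavour; as written, the proposal is incomplete.
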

    \begin{proof}
    	Let $\mathcal{P}_{MOD}$ be a communication protocol for $\text{MOD}_\epsilon$ in the one-way communication model with failure probability $1/6$. (We can obtain a protocol which achieves this failure probability and communicates $O(Q^1(\text{MOD}_\epsilon))$ qubits using $O(1)$ repetitions of the protocol which achieves failure probability $1/3$ and communicates $Q^1(\text{MOD}_\epsilon)$ qubits.)
        
        Two uses of $\mathcal{P}_{MOD}$ suffice to obtain a communication protocol for $\text{GT}$ in the one-way communication model with failure probability $1/3$ as follows: Alice and Bob apply the protocol $\mathcal{P}_{MOD}$ using $x$ and $y$ as inputs and Bob obtains $z_0\in [(1-\epsilon)|x-y|, (1+\epsilon)|x-y|]$. They both apply the same protocol again, but now Bob inputs $y + z_0$ (Alice still inputs $x$). Bob obtains $z_1$. If $z_0 < z_1$, then $x < y$ and he outputs $0$. Otherwise, $x \ge y$ and he outputs $1$.
        \par To see why this protocol works (assuming that each use of $\mathcal{P}_{MOD}$ succeeds), note that if $x < y$, then $(2-\epsilon)|x-y| \leq |x - y - z_0| \leq (2+\epsilon)|x-y|$, and hence
        \[
        	(2-\epsilon)(1-\epsilon)|x-y| \leq z_1 \leq (2+\epsilon)(1+\epsilon)|x-y|.
        \]
        If $x \geq y$, then $0 \leq |x - y - z_0| \leq \epsilon|x-y|$, and hence
        \[
        	0 \leq z_1 \leq \epsilon(1+\epsilon)|x-y|.
        \]
        For $x < y$ we want to have $z_0 < z_1$, i.e.\ $1+\epsilon < (2-\epsilon)(1-\epsilon)$, which holds if $\epsilon < 2 - \sqrt{3}$. And for $x \geq y$ we need $z_0 \ge z_1$, i.e.\ $\epsilon(1+\epsilon) \le 1 - \epsilon$, which holds if $\epsilon \le \sqrt{2} - 1$. Therefore, by taking $\epsilon < 1/4$, for example, one can distinguish the cases $x < y$ and $x \ge y$ through a comparison between $z_0$ and $z_1$.
        \par Given that every protocol for $\text{MOD}_\epsilon$ in the one-way communication model implies a protocol for $\text{GT}$, we conclude that $Q^1(\text{GT}) = O(Q^1(\text{MOD}_\epsilon))$.
    \end{proof}
    \par The next step is to reduce the approximate graph distance problem to the approximate modulus problem, which we achieve as follows. Let $G$ be a graph with diameter $\text{diam}(G)$. By the definition of diameter, there exists a path graph $P_n \subseteq G$ with $n = \text{diam}(G)$. Therefore, a lower bound for the approximate graph distance problem on $P_n$ implies a lower bound for the same problem on $G$.
    
    The vertices of $P_n$ can be listed in the order $v_1, v_2,..., v_n$ such that the edges are $(v_i, v_{i+1})$, where $i = 1, 2,..., n - 1$. A given vertex $v_i$ can then be labeled by a binary string $x_i\in\{0,1\}^m$, with $m = \Theta(\log n)$, and hence, given $v_i,v_j\in G$, $d_G(v_i,v_j) = |x_i - x_j|$. Therefore, a communication protocol which outputs $\widetilde{d}$ such that $(1-\epsilon)d_G(v_i,v_j) \leq \widetilde{d} \leq (1+\epsilon)d_G(v_i,v_j)$ is equivalent to a communication protocol which solves $\text{MOD}_\epsilon$ on inputs $x_i$, $x_j$. So computing an approximate modulus reduces to computing an approximate graph distance.
    
    With this in mind, we can state our lower bound.
    \begin{restatable}{thm}{6}
    	Given a graph $G$ with diameter $\operatorname{diam}(G)$, the quantum communication complexity for the problem $\operatorname{DIS}_\epsilon[G]$ in the one-way communication model with $\epsilon < 1/4$ and failure probability $1/3$ is $Q^1(\operatorname{DIS}_\epsilon[G]) = \Omega(\log{\operatorname{diam}(G)})$.
    \end{restatable}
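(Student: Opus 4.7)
The plan is to chain together the two reductions already set up in the excerpt with Zhang's $\Omega(m)$ lower bound on the one-way quantum communication complexity of $\text{GT}$ on $m$-bit inputs \cite{zhang11}. Explicitly, I will build the chain $\text{GT on } \Theta(\log \operatorname{diam}(G))\text{-bit inputs} \;\preceq\; \text{MOD}_\epsilon \;\preceq\; \text{DIS}_\epsilon[G]$, so that any upper bound on $Q^1(\text{DIS}_\epsilon[G])$ implies a matching upper bound on $Q^1(\text{GT})$ that must meet Zhang's lower bound.

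First I would fix a shortest path $P_n \subseteq G$ realizing the diameter, with vertices labeled $v_1,\dots,v_n$ and $n = \operatorname{diam}(G)$, as described just before the theorem statement. Labeling $v_i$ by the $m$-bit binary expansion of $i-1$, with $m = \lceil \log_2 n \rceil$, one has $d_G(v_i,v_j) = |i-j|$, so any protocol $\mathcal{P}_{\text{DIS}}$ for $\operatorname{DIS}_\epsilon[G]$ gives, at no extra communication cost, a protocol for $\text{MOD}_\epsilon$ on $m$-bit integer inputs drawn from $\{0,\dots,n-1\}$. Concatenating with Lemma \ref{lem:lem10}, which for $\epsilon < 1/4$ gives $Q^1(\text{GT}) = O(Q^1(\text{MOD}_\epsilon))$, I obtain $Q^1(\text{GT}) = O(Q^1(\operatorname{DIS}_\epsilon[G]))$ on $m$-bit inputs. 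Applying Zhang's $Q^1(\text{GT}) = \Omega(m)$ bound then yields $Q^1(\operatorname{DIS}_\epsilon[G]) = \Omega(m) = \Omega(\log \operatorname{diam}(G))$, as required.

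The main point requiring care is that Lemma \ref{lem:lem10} invokes $\text{MOD}_\epsilon$ on the pair $(x, y+z_0)$, where $y + z_0$ can exceed $n-1$ and so may not correspond to a vertex of the path $P_n$. This is easily patched: since $z_0 \le (1+\epsilon)|x-y| \le (1+\epsilon)n$, the intermediate integer lies in a window of size $O(n)$ and can still be encoded with $m + O(1) = \Theta(\log \operatorname{diam}(G))$ bits. To accommodate it in the graph reduction, I would embed a path of length $\Theta(\operatorname{diam}(G))$ rather than exactly $\operatorname{diam}(G)$ — either by using a slightly shorter diametral path and scaling, or by noting that Zhang's bound is asymptotic and losing a constant factor in $m$ does not affect the $\Omega(\log \operatorname{diam}(G))$ conclusion. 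Beyond this bookkeeping, the proof is a direct composition of the two reductions with the known lower bound, so I do not anticipate any further obstacles.
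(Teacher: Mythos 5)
Your proposal is correct and follows essentially the same chain as the paper's proof: restrict to a diametral path $P_n$, identify $\operatorname{DIS}_\epsilon[P_n]$ with $\text{MOD}_\epsilon$ on $\Theta(\log\operatorname{diam}(G))$-bit inputs, invoke Lemma \ref{lem:lem10}, and finish with Zhang's $\Omega(m)$ bound for $\text{GT}$. Your extra remark about $y+z_0$ possibly leaving the range $\{0,\dots,n-1\}$ is a legitimate bookkeeping point that the paper silently skips, and your fix (absorbing the $O(1)$ extra bits into the $\Theta(\cdot)$) is fine.
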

    \begin{proof}
    	As mentioned before, the approximate graph distance problem on a path graph $P_n \subseteq G$ with $n = \text{diam}(G)$ should be at least as hard as the same problem on $G$, i.e., $Q^1(\text{DIS}_\epsilon[G]) \geq Q^1(\text{DIS}_\epsilon[P_n])$. Moreover, $\text{DIS}_\epsilon[P_n]$ is equivalent to $\text{MOD}_\epsilon$ on inputs of size $m = \Theta(\log{\text{diam}(G)})$, hence $Q^1(\text{DIS}_\epsilon[G]) \geq Q^1(\text{MOD}_\epsilon)$. According to Lemma~\ref{lem:lem10}, $Q^1(\text{MOD}_\epsilon) = \Omega(Q^1(\text{GT}))$, but $Q^1(\text{GT}) = \Theta(m)$~\cite[Appendix B]{zhang11}, therefore $Q^1(\text{DIS}_\epsilon[G]) = \Omega(\log{\text{diam}(G)})$.
    \end{proof}
    The above result also holds for the SMP model and then states that our communication protocol is optimal in terms of its dependence on $\text{diam}(G)$.

	\section{Measuring $\ell_1$ Distances}
    
    \par As seen in the previous sections, our communication protocol for approximating the Hamming distance can be adapted to $\ell_1$-graphs. A graph $G$ is said to be an $\ell_1$-graph if there exist vectors $u_1,...,u_n \in \mathbb{R}^m$ for some $m$, and with $n = |V(G)|$, such that $d_G(v_i,v_j) = \| u_i - u_j\|_1$ for all $v_i,v_j\in V(G)$. This connection between graphs and $\ell_1$-norm suggests an application of our approximate Hamming distance protocol to $\ell_1$-distances. More specifically, consider the following problem: Alice and Bob are each given a vector $x$, $y$ (respectively) from $[-1,1]^d$. Each entry of each vector is specified by $k$ bits, for some $k$ (1 bit to specify the sign, and $k-1$ bits $z_1,\dots,z_{k-1}$ to specify a binary fraction $z_1 2^{-1} + z_2 2^{-2} + \dots + z_{k-1} 2^{1-k}$). Alice and Bob's task is to approximate the $\ell_1$-distance between $x$ and $y$ up to relative error $\epsilon$ in the SMP model.
    
    A natural special case of this problem is where Alice and Bob are each given a probability distribution $x,y \in \mathbb{R}^d$, respectively, and are asked to approximately compute the $\ell_1$-distance between them (equivalently, the total variation distance, which is defined as half the $\ell_1$-distance). This corresponds to the special case where $x_i,y_i \ge 0$ for all $i$, and $\sum_i x_i = \sum_i y_i = 1$.
    
    Alice and Bob can use our approximate Hamming distance protocol to approximately compute $\|x-y\|_1$: the idea is to map these vectors into a Hamming metric via a unary representation~\cite{Linial1995}.
    Each entry $z \in [-1,1]$ of each vector is mapped to a $2^k$-bit string $s(z)$ such that the first $2^{k-1} (z+1)$ bits of $s(z)$ are set to 1, and the remaining bits are set to 0. Then, for any $z$, $w$, $|z-w| = d(s(z),s(w))/2^{k-1}$. Letting $s(x)$ denote the result of applying this map to each entry of $x$ and concatenating the results, we have $\|x-y\|_1 = d(s(x),s(y))/2^{k-1}$ for bit strings $s(x)$, $s(y)$ of length $m=2^k d$. So we can use our usual communication protocol (Protocol~\ref{prot:cube}) to deliver an estimate of $\|x-y\|_1$ up to relative error $\epsilon$ using $O((\log^2 m)(\log \log m)/\epsilon^3)$ qubits of communication, which is $O((\log^2 d)(\log \log d)/\epsilon^3)$ when $k \le \log d$.
    
    The use of a unary representation may seem wasteful, but a straightforward binary representation would not preserve distances correctly for all inputs. There is also a lower bound that the communication complexity of this problem must have at least a linear dependence on $k$: by the lower bound on the complexity of the MOD$_\epsilon$ problem that follows from Lemma~\ref{lem:lem10}, $\Omega(k)$ bits of communication are required to approximately compute $\|x-y\|_1$ even for $d = 1$. Finally, the protocol can easily be extended to the setting where $x,y \in [-M,M]^d$, for some $M \ge 1$, by rescaling the vectors appropriately.

\section{Conclusions}

We developed an efficient quantum communication protocol to approximately compute the Hamming distance between two $n$-bit strings in the SMP model up to relative error $\epsilon$, which uses $\widetilde{O}((\log{n})^2/\epsilon^3)$ qubits of communication, whereas any classical communication protocol needs to transmit at least $\Omega(\sqrt{n})$ bits. We stress that the protocol approximates the Hamming distance up to a relative, and not additive, error, so that small Hamming distances are approximated accurately.

The Hamming distance protocol was modified to approximate the distance between
any two vertices in a graph. This modification was based on embedding the graph into a subgraph of the Hamming cube such that all distances are preserved up to a constant factor. This requirement restricts the class of graphs to which the original Hamming distance protocol can be applied. Graphs with this property are known as $\ell_1$-graphs. The modified quantum communication protocol to approximate the vertex distance in $\ell_1$-graphs in the SMP model up to relative error $\epsilon$ transmits $\widetilde{O}(\log(\text{diam}(G))/\epsilon^3)$ qubits, where $\text{diam}(G)$ is the diameter of the graph, so the protocol is only efficient for low diameter graphs. A lower bound on the number of communicated qubits needed to approximate the vertex distance shows that this limitation of our protocol is due to the problem itself. More specifically, we proved that any one-way quantum protocol to approximate the distance between any two vertices in a graph needs to communicate at least $\Omega(\log\text{diam}(G))$ qubits. Finally, the original Hamming distance protocol was also modified to approximate the $\ell_1$-norm between two vectors $x,y\in[-1,1]^d$ specified by $k \leq \log{d}$ bits.

Two interesting questions remain open: can one find a similar result to Theorem~\ref{thm:l1graph} which holds for all graphs, without the restriction to $\ell_1$-graphs? And can one improve the communication complexity dependence on $\epsilon$?

No new data were created during this study. 

\subsection*{Acknowledgements}
 
We thank Rapha\"el Clifford for helpful discussions during the course of this work. JFD thanks Noah Linden and Ronald de Wolf for pointing out a small error in Lemma~\ref{lem:lem2}, for helping improving Lemma~\ref{lem:lem1} and for comments on the paper. Finally, we thank Victor Chepoi for spotting a mistake in our previous characterisation of $\ell_1$-graphs and for pointing out many references in this direction. We acknowledge support from the QuantERA ERA-NET Cofund in Quantum Technologies implemented within the European Union's Horizon 2020 Programme (QuantAlgo project). AM was supported by EPSRC Early Career Fellowship Grant No.\ EP/L021005/1. JFD was supported by the Bristol Quantum Engineering Centre for Doctoral Training, EPSRC Grant No.\ EP/L015730/1.

\bibliographystyle{unsrt}
\bibliography{doriguello-montanaro}

\end{document}